\newtheorem{theorem}{Theorem}
\newtheorem{definition}{Definition}
\newtheorem{lemma}{Lemma}
\newtheorem{observation}{Observation}
\newcommand{\probname}[1]{\textsc{\lowercase{#1}}}
\newcommand{\problem}[1]{\probname{#1}\xspace}
\newcommand{\VCk}{\probname{Vertex Cover($k$)}\xspace}
\newcommand{\EDSk}{\probname{Edge Dominating Set($k$)}\xspace}
\newcommand{\notdHS}{\probname{not} $d$-\probname{Hitting Set}\xspace}
\newcommand{\dHSk}{$d$-\probname{Hitting Set($k$)}\xspace}
\newcommand{\dSP}{$d$-\probname{Set Packing}\xspace}
\newcommand{\dSPk}{$d$-\probname{Set Packing($k$)}\xspace}
\newcommand{\dHFVDk}{$\mathcal{H}$-\probname{Free Vertex Deletion($k$)}\xspace}
\newcommand{\dHPk}{$H$-\probname{Packing($k$)}\xspace}
\newcommand{\F}{\ensuremath{\mathcal{F}}\xspace}
\newcommand{\Pee}{\ensuremath{\mathcal{P}}\xspace}
\newcommand{\N}{\mathbb{N}}
\newcommand{\Oh}{\mathcal{O}}
\newcommand{\yes}{\textsc{yes}\xspace}
\newcommand{\no}{\textsc{no}\xspace}
\newcommand{\coNP}{\ensuremath{\mathsf{coNP}}\xspace}
\newcommand{\NP}{\ensuremath{\mathsf{NP}}\xspace}
\newcommand{\containment}{\ensuremath{\mathsf{NP\subseteq coNP/poly}}\xspace}
\newcommand{\Rvc}{\ensuremath{R_{\textsc{vc}}}\xspace}
\newcommand{\Reds}{\ensuremath{R_{\textsc{eds}}}\xspace}
\newcommand{\introduceparameterizedproblem}[4]{
\noindent
\fbox{
\parbox{0.97\textwidth}{
	#1\hfill \textbf{Parameter:} #3\\
	\textbf{Input:} #2\\	
	\textbf{Question:} #4}}
\vspace{0.3cm}	
}
\title{A shortcut to (sun)flowers: Kernels in logarithmic space or linear time\thanks{Supported by the Emmy Noether-program of the German Research Foundation (DFG), research project PREMOD (KR 4286/1).}}
\author{Stefan Fafianie}
\author{Stefan Kratsch}
\affil{University of Bonn, Germany, \texttt{$\{$fafianie$,$kratsch$\}$@cs.uni-bonn.de}}
\begin{document}

\maketitle

\begin{abstract}
 We investigate whether kernelization results can be obtained if we restrict kernelization algorithms to run in logarithmic space. This restriction for kernelization is motivated by the question of what results are attainable for preprocessing via simple and/or local reduction rules. We find kernelizations for \dHSk, \dSPk, \EDSk and a number of hitting and packing problems in graphs, each running in logspace. Additionally, we return to the question of linear-time kernelization. For \dHSk a linear-time kernelization was given by van Bevern [Algorithmica (2014)]. We give a simpler procedure and save a large constant factor in the size bound. Furthermore, we show that we can obtain a linear-time kernel for \dSPk as well.
\end{abstract}

\section{Introduction}
The notion of \emph{kernelization} from parameterized complexity offers a framework in which it is possible to establish rigorous upper and lower bounds on the performance of polynomial-time preprocessing for \NP-hard problems. Efficient preprocessing is appealing because one hopes to simplify and shrink input instances before running an exact exponential-time algorithm, approximation algorithm, or heuristic. A well-known example is that given an instance $(G,k)$, asking whether graph $G$ has a vertex cover of size at most $k$, we can efficiently compute an equivalent instance $(G',k')$ where $k'\leq k$ and $G'$ has at most $2k$ vertices \cite{chen2001vertex}. On the other hand, the output instance could still have $\Omega(k^2)$ edges and a result of Dell and van Melkebeek~\cite{dell2010satisfiability} indicates that this cannot be avoided unless \containment (and the polynomial hierarchy collapses). Many intricate techniques have been developed for the field of kernelization and some other variants have been considered. For example, the more relaxed notion of Turing kernelization asks whether a problem can be solved by a polynomial-time algorithm that is allowed to query an oracle for answers to instances of small size \cite{lokshtanov2009new}.

In this work we take a more restrictive view. When considering reduction rules for \NP-hard problems that a human would come up with quickly, these would often be very simple and probably aimed at local structures in the input. Thus, the matching theoretical question would be whether we can also achieve nice kernelization results when restricted to ``simple reduction rules.'' This is of course a very vague statement and largely a matter of opinion. For local reduction rules this seems much easier: If we restrict a kernelization to running in logarithmic space, then we can no longer perform ``complicated'' computations like, for example, running a linear program or even just finding a maximal matching in a graph. Indeed, for an instance $x$, the typical use of $\log |x|$ bits would rather be to store a counter with values up to $|x|^{\Oh(1)}$ or to remember a pointer to some position in $x$.

The main focus of our work is to show that a bunch of classic kernelization results can also be made to work in logarithmic space. To the best of our knowledge such a kernelization was previously only known for \VCk \cite{cai1997advice}. Concretely, we show that \dHSk, \dSPk, and \EDSk as well as a couple of implicit hitting set and set packing type problems on graphs admit polynomial kernels that can be computed in logarithmic space.
The astute reader will instantly suspect that the well-known sunflower lemma will be behind this, but---being a bit fastidious---this is only partially true.

It is well-known that so-called \emph{sunflowers} are very useful for kernelization (they can be used to obtain polynomial kernels for, e.g., \dHSk~\cite{flum2006parameterized} and \dSPk~\cite{dell2012kernelization}). A $k$-sunflower is a collection of $k$ sets $F_1,\ldots,F_k$ such that the pairwise intersection of any two sets is the same set $C$; called the core. The sets $F_1\setminus C,\ldots,F_k\setminus C$ are therefore pairwise disjoint. When seeking a $k$-hitting set $S$, the presence of a $(k+1)$-sunflower implies that $S$ must intersect the core, or else fail to hit at least one set $F_i$. The Sunflower Lemma of Erd\H{o}s and Rado implies that any family with more than $d!k^d$ sets, each of size $d$, must contain a $(k+1)$-sunflower which can be efficiently found. Thus, so long as the instance is large enough, we will find a core $C$ that can be safely added as a new constraint, and the sets $F_i$ containing $C$ may be discarded.

Crucially, the only point of the disjoint sets $F_1\setminus C,\ldots,F_k\setminus C$ is to certify that we need at least $k$ elements to hit all sets $F_1,\ldots,F_k$, assuming we refuse to pick an element of $C$. What if we forgo the disjointness requirement and only request that not picking an element of $C$ incurs a hitting cost of at least $k$ (or at least $k+1$ for the above illustration)? It turns out that the corresponding structure is well-known under the name of a \emph{flower}: A set $\F$ is a $k$-flower with core $C$ if the collection $\{F\setminus C \colon F\in\F, F\supseteq C\}$ has minimum hitting set size at least $k$. Despite the seemingly complicated requirement, H\aa{}stad et al.~\cite{haastad1995top} showed that any family with more than $k^d$ sets must contain a $(k+1)$-flower. Thus, by replacing sunflowers with flowers, we can save the extra $d!$ factor in quite a few kernelizations with likely no increase to the running time. In order to meet the space requirements for our logspace kernelizations, we avoid explicitly finding flowers and instead use careful counting arguments to ensure that a $(k+1)$-flower with core $C \subseteq F$ exists when we discard a set $F$.

Finally, we also return to the question of linear-time kernelization that was previously studied in, e.g., \cite{niedermeier2003efficient, van2014towards}. Using flowers instead of sunflowers we can improve a linear-time kernelization for \dHSk by van Bevern~\cite{van2014towards} from $d!\cdot d^{d+1} \cdot (k+1)^d$ to just $(k+1)^d$ sets (we also save the $d^{d+1}$ factor because of the indirect way in which we use flowers). Similarly, we have a linear-time kernelization for \dSPk to $(d(k-1)+1)^d$ sets.
We note that for linear-time kernelization the extra applications for hitting set and set packing type problems do not necessarily follow: In logarithmic space we can, for example, find all triangles in a graph and thus kernelize \problem{Triangle-free Vertex Deletion($k$)} and \problem{Triangle Packing($k$)}. In linear time we will typically have no algorithm available that can extract the constraints respectively the feasible sets for the packing that are needed to apply a \dHSk or \dSPk kernelization.

We remark that the kernelizations for \dHSk and \dSPk via representative sets (cf.~\cite{kratsch2014recent}) give more savings in the kernel size. For \dHSk this approach yields a kernel with at most $\binom{k+d}{d} = \frac{(k+d)!}{d!+k!} = \frac{1}{d!}(k+1) \cdot \ldots \cdot (k+d) > \frac{k^d}{d!}$ sets, thus saving at most another $d!$ factor. It is however unclear if this approach can be made to work in logarithmic space or linear time. Applying the current fastest algorithm for computing a representative set due to Fomin et al.~\cite{fomin2014efficient} gives us a running time of $\Oh(\binom{k+d}{d}|\F|d^\omega + |\F|\binom{k+d}{k}^{\omega-1})$ where $\omega$ is the matrix multiplication exponent.

\paragraph{Organization}
We will start with preliminaries in Section~\ref{sec:prelim} and give a formal introduction on (sun)flowers in Section~\ref{sec:flowers}. We present our logspace kernelization results for \dHSk, \dSPk, and \EDSk in Sections~\ref{sec:logdhsk}, \ref{sec:logdspk}, and~\ref{sec:logedsk} respectively. In Section~\ref{sec:loggraph} we describe how our logspace kernels for packing and hitting sets can be used in order to obtain logspace kernelizations for implicit hitting and packing problems on graphs. We show how our techniques can be used in conjunction with a data-structure and subroutine by van Bevern~\cite{van2014towards} in order to obtain a smaller linear-time kernel for \dHSk in Section~\ref{sec:lindhsk}. This also extends to a linear-time kernel for \dSPk which we give in Section~\ref{sec:lindspk}.
Concluding remarks are given in Section~\ref{sec:conc}.

\section{Preliminaries} \label{sec:prelim}
 
\paragraph{Set families and graphs.}  We use standard notation from graph theory and set theory. Let $U$ be a finite set, let $\F$ be a family of subsets of $U$, and let $S \subseteq U$. We say that $S$ \emph{hits} a set $F \in \F$ if $S \cap F \neq \emptyset$. In slight abuse of notation we also say that $S$ \emph{hits} $\F$ if for every $F \in \F$ it holds that $S$ hits $F$. More formally, $S$ is a \emph{hitting set} (or \emph{blocking set}) for $\F$ if for every $F \in \F$ it holds that $S \cap F \neq \emptyset$. If $|S| \leq k$, then $S$ is a $k$-hitting set. A family $\Pee \subseteq \F$ is a \emph{packing} if the sets in $\Pee$ are pairwise disjoint; if  $|\Pee| = k$, then $\Pee$ is called a $k$-packing. In the context of instances $(U, \F, k$) for \dHSk or \dSPk we let $n = |U|$ and $m = |\F|$. Similarly, for problems on graphs $G = (V,E)$ we let $n = |V|$ and $m = |E|$. A \emph{restriction} $\F_C$ of a family $\F$ onto a set $C$ is the family $\{F \setminus C: F \in \F, F \supseteq C\}$, i.e., it is obtained by only taking sets in $\F$ that are a superset of $C$ and removing $C$ from these sets.
 
\paragraph{Parameterized complexity.} A \emph{parameterized problem} is a language $Q \subseteq \Sigma^* \times \N$; the second component of instances $(x, k) \in \Sigma^* \times \N$ is called the \emph{parameter}. A parameterized problem $Q \subseteq \Sigma^* \times \N$ is \emph{fixed-parameter tractable} if there is an algorithm that, on input $(x, k) \in \Sigma^* \times \N$, correctly decides if $(x,k) \in Q$ and runs in time $\Oh(f(k)|x|^c)$ for some constant $c$ and any computable function $f$. A \emph{kernelization algorithm} (or \emph{kernel}) for a parameterized problem $Q \subseteq \Sigma^* \times \N$ is an algorithm that, on input $(x, k) \in \Sigma^* \times \N$, outputs in time $(|x|+k)^{\Oh(1)}$ an equivalent instance $(x', k')$ with $|x'| + k' \leq g(k)$ for some computable function $g \colon \N \rightarrow \N$ such that $(x, k) \in Q \Leftrightarrow (x', k') \in Q$. Here $g$ is called the \emph{size} of the kernel; a \emph{polynomial kernel} is a kernel with polynomial size.
 
\section{Sunflowers and flowers} \label{sec:flowers}
The notion of a \emph{sunflower} has played a significant role in obtaining polynomial kernels for the \dHSk and \dSPk problems. We start with a formal definition.
 
\begin{definition} \label{def:sunflower}
 A \emph{sunflower} with $l$ \emph{petals} and \emph{core} $C$ is a family $\F = \{F_1, \ldots, F_l\}$ such that each $F_i \setminus C$ is non-empty and $F_i \cap F_j = C$ for all $i \neq j$.
\end{definition}
 
The prominent \emph{sunflower lemma} by Erd\H{o}s and Rado states that we are guaranteed to find a sunflower with sufficiently many petals in a $d$-uniform set family if this family is large enough.

\begin{lemma}[Erd\H{o}s and Rado \cite{erdos1960intersection}] \label{lem:sunflower}
 Let $\F$ be a family of sets each of cardinality $d$. If $|\F| > d!(l-1)^d$, then $\F$ contains a sunflower with $l$ petals.
\end{lemma}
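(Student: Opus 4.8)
The plan is to prove the Erdős–Rado sunflower lemma by induction on the uniformity $d$, extracting a large disjoint subfamily and then recursing on its "remainder" inside a common restriction.

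Here is the statement to prove: a family $\mathcal{F}$ of $d$-element sets with $|\mathcal{F}| > d!(l-1)^d$ contains a sunflower with $l$ petals.

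I'll sketch my proof below.

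---

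\begin{proof}
The plan is to proceed by induction on $d$. I first deal with the base case and then set up a greedy extraction of pairwise disjoint sets, after which the argument splits depending on whether this greedy family is already large enough to give the desired sunflower.

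\textbf{Base case.} For $d = 1$ the family $\F$ consists of singletons, and the hypothesis reads $|\F| > (l-1)^1 = l-1$, so $|\F| \geq l$. Any $l$ distinct singletons form a sunflower with $l$ petals and empty core $C = \emptyset$ (the pairwise intersections are all empty and each $F_i \setminus C = F_i$ is non-empty). This settles the base case.

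\textbf{Inductive step.} Assume the claim holds for uniformity $d-1$ and let $\F$ be a $d$-uniform family with $|\F| > d!(l-1)^d$. First I would greedily select a maximal subfamily $\M = \{F_1, \ldots, F_t\}$ of pairwise disjoint sets from $\F$; this can be done by repeatedly adding any set disjoint from those chosen so far. There are two cases.

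\emph{Case 1: $t \geq l$.} Then $\M$ itself (or any $l$ of its members) is a sunflower with $l$ petals and empty core, since the sets are pairwise disjoint, and we are done.

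\emph{Case 2: $t \leq l-1$.} Let $Y = F_1 \cup \cdots \cup F_t$, so $|Y| \leq td \leq (l-1)d$. By maximality of $\M$, every set $F \in \F$ intersects $Y$; that is, every set in $\F$ contains at least one element of $Y$. By averaging (pigeonhole), some fixed element $y \in Y$ lies in at least
\begin{equation*}
\frac{|\F|}{|Y|} > \frac{d!(l-1)^d}{(l-1)d} = (d-1)!(l-1)^{d-1}
\end{equation*}
of the sets of $\F$. Now I would collect all sets of $\F$ containing this $y$, delete $y$ from each of them, and form the $(d-1)$-uniform family $\F' = \{ F \setminus \{y\} : F \in \F,\, y \in F \}$. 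By the count above, $|\F'| > (d-1)!(l-1)^{d-1}$, so by the induction hypothesis $\F'$ contains a sunflower with $l$ petals, say with core $C'$ and petals $G_1, \ldots, G_l$. Re-adding $y$ to each set, the family $\{ G_i \cup \{y\} \}_{i=1}^{l}$ is a sunflower in $\F$ with $l$ petals and core $C' \cup \{y\}$: the pairwise intersections all equal $C' \cup \{y\}$, and each petal minus the core equals $G_i \setminus C'$, which is non-empty. This completes the induction.
\end{proof}

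---

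The main obstacle, and the only place where care is genuinely needed, is the arithmetic bookkeeping in Case 2: I must verify that after passing to the subfamily through a single common element $y$ and dropping down to uniformity $d-1$, the surviving count still exceeds the inductive threshold $(d-1)!(l-1)^{d-1}$. This is exactly where the factorial in the bound is spent, and it hinges on the averaging step using $|Y| \leq (l-1)d$ rather than a cruder bound; a looser estimate on $|Y|$ would fail to close the induction. Everything else — the base case, the maximality argument showing every set meets $Y$, and the reattachment of $y$ to recover a genuine sunflower with non-empty petals — is routine once this counting step is in place.
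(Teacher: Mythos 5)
Your proof is correct: it is the classical Erd\H{o}s--Rado argument, the arithmetic in Case~2 (dividing $d!(l-1)^d$ by $|Y| \leq (l-1)d$ to clear the inductive threshold $(d-1)!(l-1)^{d-1}$) checks out, and the lifting step at the end correctly verifies both the common-intersection and non-empty-petal conditions of Definition~\ref{def:sunflower}. Note, however, that the paper never proves Lemma~\ref{lem:sunflower} at all: it cites Erd\H{o}s and Rado and instead gives a self-contained proof only of the relaxed \emph{flower} lemma (Lemma~\ref{lem:flower}), which is what its kernels actually rely on. The comparison between your argument and that proof is instructive, because the two inductions are structurally parallel but pay different costs per level of recursion. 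Since you need genuine pairwise disjointness, your hitting structure is the union $Y$ of a maximal disjoint subfamily, of size up to $(l-1)d$, so the pigeonhole step divides the family size by $(l-1)d$ at each level; telescoping this over $d$ levels is exactly where the $d!(l-1)^d$ bound comes from, as you observe yourself. The flower proof only needs a lower bound on blocking-set size, so it can take a \emph{minimal blocking set} $X$ with $|X| \leq l-1$ (if $|X| \geq l$, the whole family is already an $l$-flower with empty core --- the analogue of your Case~1), and its pigeonhole divides by only $l-1$ per level, yielding $(l-1)^d$ with no factorial. That saved $d!$ factor is precisely the point of the paper's shift from sunflowers to flowers, so your proof serves as a good foil for seeing which requirement the factorial pays for: actually disjoint petals, as opposed to a mere guarantee that avoiding the core incurs a hitting cost of at least $l$.
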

 
This lemma can be made algorithmic such that we can find a sunflower with $l$ petals in a set family $\F$ in time $\Oh(|\F|)$ if $|\F| > d!(l-1)^d$. Flum and Grohe~\cite{flum2006parameterized} apply this result to obtain a polynomial kernel for \dHSk by repeatedly finding a sunflower with $k+1$ petals and replacing it by its core $C$. This operation preserves the status of the \dHSk problem since any $k$-hitting set $S$ for $\F$ must hit all sets in the sunflower. Because the sunflower has at least $k+1$ petals, $S$ must contain an element of $C$. Alternatively, as used for example by Kratsch~\cite{kratsch2012polynomial}, one can instead look for sunflowers with at least $k+2$ petals and discard sets such that $k+1$ petals are preserved. The presence of these $k+1$ petals in the reduced instance still forces any $k$-hitting set $S$ to hit the core $C$; thus $S$ must hit the discarded sets as well. This has the advantage that, besides preserving all minimal solutions, a subset of the family given in the input is returned in the reduced instance. Kernels adhering to these properties preserve a lot of structural information and are formalized as being \emph{expressive} by van Bevern \cite{van2014towards}. 

Fellows et al.~\cite{fellows2004faster} give a polynomial kernel for \dSPk. Dell and Marx~\cite{dell2012kernelization} provide a self-contained proof for this result that uses sunflowers. Here the crucial observation is that any $k$-packing of sets of size $d$ can intersect with at most $dk$ petals of a sunflower if it avoids intersection with the core (the same argument is implicit in a kernelization for problems from $\mathsf{MAX}$ \NP in \cite{kratsch2012polynomial}). Each of the described kernelization algorithms returns instances of size $\Oh(k^d)$. However, as a consequence of using the sunflower lemma, there is a hidden $d!$ multiplicative factor in these size bounds. We avoid this by considering a relaxed form of sunflower, known as \emph{flower} (cf. Jukna \cite{jukna2011extremal}) instead.

\begin{definition} \label{def:flower}
 An $l$-\emph{flower} with \emph{core} $C$ is a family $\F$ such that any blocking set for the restriction $\F_C$ of sets in $\F$ onto $C$ contains at least $l$ elements.
\end{definition}
 
Note that every sunflower with $l$ petals is also an $l$-flower but not vice-versa. From Definition~\ref{def:flower} it follows that the relaxed condition for set disjointness outside of $C$ is still enough to force a $k$-hitting set $S$ to contain an element of $C$ if there is a $(k+1)$-flower with core $C$. Similar to Lemma~\ref{lem:sunflower}, H\aa{}stad et al.~\cite{haastad1995top} give an upper bound on the size of a set family that must contain an $l$-flower. The next lemma is a restatement of this result. We give a self-contained proof following Jukna's book.

\begin{lemma}[{cf. Jukna \cite[Lemma~7.3]{jukna2011extremal}}] \label{lem:flower}
 Let $\F$ be a family of sets each of cardinality $d$. If $|\F| > (l-1)^d$, then $\F$ contains an $l$-flower. 
\end{lemma}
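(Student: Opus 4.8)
The plan is to induct on the uniformity $d$. The key dichotomy at each step is whether $\F$ is already an $l$-flower with the empty core: by Definition~\ref{def:flower} this holds precisely when $\F_{\emptyset} = \F$ admits no blocking set of size at most $l-1$, i.e.\ when the minimum hitting set of $\F$ itself has size at least $l$. If it does, we are immediately done with $C = \emptyset$. The base case $d = 1$ falls out of this observation, since a family of more than $l-1$ distinct singletons needs all of its $|\F| \geq l$ elements in order to be hit.

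Otherwise there is a blocking set $S$ of $\F$ with $|S| \leq l-1$, so every $F \in \F$ meets $S$. An averaging argument then yields an element $x \in S$ contained in more than $(l-1)^{d-1}$ of the sets, since $|S| \leq l-1$ and $|\F| > (l-1)^d$. I would then pass to the family $\F' := \{F \setminus \{x\} \colon F \in \F,\ x \in F\}$, which is $(d-1)$-uniform, keeps its members distinct (deleting the common element $x$ cannot merge two distinct sets that both contain $x$), and satisfies $|\F'| > (l-1)^{d-1}$. The induction hypothesis then supplies a subfamily $\mathcal{G}' \subseteq \F'$ that is an $l$-flower with some core $C'$, where necessarily $x \notin C'$.

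The final step lifts this flower back to $\F$. I would take $C := C' \cup \{x\}$ and let $\mathcal{G} \subseteq \F$ consist of the sets that produced $\mathcal{G}'$, i.e.\ the $F \ni x$ with $F \setminus \{x\} \in \mathcal{G}'$. The crux is the identity of restricted families $\mathcal{G}_C = \mathcal{G}'_{C'}$: a set $F \in \mathcal{G}$ satisfies $F \supseteq C$ exactly when $F \setminus \{x\} \supseteq C'$, and in that case $F \setminus C = (F \setminus \{x\}) \setminus C'$, so the two families coincide set-for-set. Once this is established, they share the same minimum blocking-set size, whence $\mathcal{G}_C$ has no blocking set smaller than $l$ and $\mathcal{G}$ is an $l$-flower with core $C$.

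I expect the main obstacle to be bookkeeping rather than any deep idea: one must check that adjoining $x$ to the core corresponds \emph{exactly} to restricting to the sets containing $x$, so that no sets are silently lost or gained when moving between $\F_C$ and $(\F')_{C'}$, and that distinctness survives the deletion of $x$. It is also worth recording the monotonicity that any blocking set for a larger restricted family also blocks a smaller one; this is what lets the argument pass freely between ``$\F$ \emph{contains} an $l$-flower'' and ``$\F$ \emph{is} an $l$-flower for a suitable core,'' and confirms that enlarging $\mathcal{G}$ back to all of $\F$ preserves the flower property.
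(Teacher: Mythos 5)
Your proof is correct and takes essentially the same route as the paper's: induction on $d$, the dichotomy between $\mathcal{F}$ being an $l$-flower with empty core and the existence of a blocking set of size at most $l-1$, an averaging argument producing an element $x$ lying in more than $(l-1)^{d-1}$ sets, and lifting an $l$-flower from the restriction onto $\{x\}$ by adjoining $x$ to its core. Your explicit verification of the identity $\mathcal{G}_C = \mathcal{G}'_{C'}$ (and the monotonicity of the flower property under enlarging the family) merely spells out the lifting step that the paper leaves implicit.
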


\begin{proof}
 We prove the lemma for any $l$ by induction over $1 \leq d' \leq d$. If $d' = 1$, then the lemma obviously holds since any $l$ sets in $\F$ are pairwise disjoint and even form a sunflower with core $C = \emptyset$. Let us assume that it holds for sets of size $d'-1$; we prove that it holds for sets of size $d'$ by contradiction. Suppose that there is no $l$-flower in $\F$ while $|\F| > (l-1)^{d'}$. Let $X$ be a minimal blocking set for $\F$. We have that $|X| \leq l-1$, otherwise $\F$ itself is an $l$-flower with core $C = \emptyset$. Since $X$ intersects with every set in $\F$ and $|\F| > (l-1)^{d'}$, there must be some element $x \in X$ that intersects more than $\frac{(l-1)^{d'}}{l-1} = (l-1)^{d'-1}$ sets. Therefore, the restriction $\F_x$ contains more than $(l-1)^{d'-1}$ sets, each of size $d'-1$. By the induction hypothesis we find that $\F_x$ contains an $l$-flower with core $C'$ and obtain an $l$-flower in $\F$ with core $C = C' \cup \{x\}$.
\end{proof}

The proof for Lemma~\ref{lem:flower} implies that we can find a flower in $\Oh(|\F|)$ time if $|\F| > (l-1)^d$ by recursion: Let $\F$ be a family of sets of size $d$ in which we currently want to find an $l$-flower. Pick an element $x$ such that $\F_x$ has more than $(l-1)^{d-1}$ sets; then find a flower in $\F_x$ and add $x$ to its core. If no such $x$ exists, then return $\F$ instead since any set of size $l-1$ intersects with at most $(l-1) \cdot (l-1)^{d-1} = (l-1)^d < |\F|$ sets in $\F$, i.e., a blocking set for $\F$ requires at least $l$ elements. However, in order to obtain our logspace and linear-time kernels we avoid explicitly finding a flower. Instead, by careful counting, we guarantee that a flower must exist with some fixed core $C$ if two conditions are met. To this end we use Lemma~\ref{lem:flowercounting}. Note that we no longer assume $\F$ to be $d$-uniform but instead only require that any set in $\F$ is of size at most $d$, similar to the families that we consider in instances of \dHSk and \dSPk. If the required conditions hold, we find that a family $\F$ either contains an $l$-flower with core $C$ or the set $C$ itself. Thus, any hitting set of size at most $l-1$ for $\F$ must contain an element of $C$. For our \dHSk kernels we use the lemma with $l = k+1$.

\begin{lemma} \label{lem:flowercounting}
 For a finite set~$U$, constant~$d$, and a set~$C \in \binom{U}{< d}$, let $\F \subseteq \binom{U}{\leq d}$ be a family such that
 
\begin{itemize}
 \item[(1)] there are at least $l^{d-|C|}$ supersets~$F \supseteq C$ in $\F$ and
 \item[(2)] there are at most $l^{d-|C'|}$ supersets~$F' \supseteq C'$ in $\F$ for any other $C' \supsetneq C$, $C' \in \binom{U}{\leq d}$.
\end{itemize}
 
Then \F contains an $l$-flower with core~$C$ or $C \in \F$.

\end{lemma}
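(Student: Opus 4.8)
The plan is to argue directly that, unless $C\in\F$, the minimum size of a blocking set for the restriction $\F_C$ is at least $l$, which is exactly the defining property of an $l$-flower with core $C$. First I would dispose of the degenerate case: if $C\in\F$, the second disjunct of the conclusion holds and there is nothing left to prove, so from now on assume $C\notin\F$. Under this assumption no set of $\F_C$ is empty, since an empty set in $\F_C$ could only arise from $C$ itself lying in $\F$. I would also record the elementary but crucial step of reading conditions~(1) and~(2) ``inside'' the restriction: the map $F\mapsto F\setminus C$ is a bijection from $\{F\in\F: F\supseteq C\}$ onto $\F_C$, so condition~(1) states precisely that $|\F_C|\geq l^{d-|C|}$, and for any $x\in U\setminus C$ the sets of $\F_C$ containing $x$ correspond bijectively to the supersets of $C'=C\cup\{x\}$ in $\F$.

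The heart of the argument is a union bound. Suppose, for contradiction, that $\F_C$ admits a blocking set $X$ with $|X|\leq l-1$; since every set of $\F_C$ is a subset of $U\setminus C$, I may discard any elements of $X$ lying in $C$ and assume $X\subseteq U\setminus C$ without increasing its size. Because $X$ hits every set of $\F_C$, we have $\F_C=\bigcup_{x\in X}\{G\in\F_C: x\in G\}$. For each $x\in X$ the set $C'=C\cup\{x\}$ is a strict superset of $C$ with $|C'|=|C|+1\leq d$ (here I use $|C|<d$), so condition~(2) bounds the number of supersets of $C'$, and hence the number of sets of $\F_C$ through $x$, by $l^{d-|C'|}=l^{d-|C|-1}$. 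Summing over the at most $l-1$ elements of $X$ yields $|\F_C|\leq (l-1)\,l^{d-|C|-1}=l^{d-|C|}-l^{d-|C|-1}<l^{d-|C|}$, contradicting condition~(1). Hence every blocking set of $\F_C$ has at least $l$ elements, so $\F$ is itself an $l$-flower with core $C$ and in particular contains one.

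I do not expect a genuine obstacle here; the proof is short once the two conditions are transferred to $\F_C$. The only points deserving care are bookkeeping: one must verify that condition~(1) truly controls $|\F_C|$ via the bijection $F\mapsto F\setminus C$ (so that the count of supersets equals the count of restricted sets), and that $C\cup\{x\}$ stays an admissible set to which condition~(2) applies, which is precisely why the hypothesis $C\in\binom{U}{<d}$ is needed. It is also worth noting that the argument only ever invokes condition~(2) for single-element extensions $C'=C\cup\{x\}$; the full strength of~(2) over all $C'\supsetneq C$ exceeds what this combinatorial bound requires, presumably reflecting the counters that are convenient to maintain during the kernelization rather than a necessity of the proof.
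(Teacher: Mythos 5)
Your proof is correct and follows essentially the same argument as the paper: both dispose of the case $C\in\F$, pass to the restriction $\F_C$, bound the number of sets of $\F_C$ through each element $x$ of a blocking set by $l^{d-|C|-1}$ via condition~(2) applied to $C'=C\cup\{x\}$, and conclude from condition~(1) that any blocking set has at least $l$ elements. The only differences are cosmetic---you argue by contradiction with an explicit union bound and spell out the bijection $F\mapsto F\setminus C$, whereas the paper bounds $|X|$ directly for an arbitrary blocking set $X$.
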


\begin{proof}
 Let us consider the restriction~$\F_C = \{ S \setminus C : S \in \F, S \supseteq C\}$ of sets in \F onto $C$. If $C \in \F$, then we are done. In the other case, let $X$ be a blocking set for $\F_C$, i.e., $X \cap F \neq \emptyset$ for all $F \in \F_C$ (by assumption $C \notin \F$, thus $\emptyset \notin \F_C$ and a blocking set exists). For every element $x \in X$ consider the number of sets in $\F_C$ that contain $x$; these correspond to the supersets of $C' = C \cup \{x\}$ in $\F$. We obtain from property (2) that there are at most $l^{d-|C'|} = l^{d-|C|-1}$ such sets. Thus, $|\F_C| \leq |X| \cdot l^{d-|C|-1}$ since every set in $\F_C$ has a non-empty intersection with $X$ while we have previously bounded the number of sets in $\F_C$ that contain at least one element of $X$. By property (1) of \F we have that $|\F_C| \geq l^{d-|C|}$. Therefore $|X| \geq l$ and consequently $\F' = \{S : S \in \F, S \supseteq C \}$ is the desired $l$-flower.
\end{proof}

The astute reader may find that Lemma~\ref{lem:flowercounting} is in a sense not completely tight. Indeed, if we require that there are more than $(l-1)^{d-|C|}$ supersets of $C$  (instead of at least that many) and at most $(l-1)^{d-|C'|}$ supersets for bigger cores $C'$, then an $l$-flower with core $C$ must exist (if $X$ hits $\F_C$, then $(l-1)^{d-|C|} < |\F_C| \leq |X| \cdot (l-1)^{d-|C|-1}$, therefore $|X| > l-1$). For technical convenience, the present formulation is more suitable for our algorithms.

\section{Logspace kernel for Hitting Set} \label{sec:logdhsk}
\introduceparameterizedproblem{\dHSk}{A set $U$ and a family $\F$ of subsets of $U$ each of size at most $d$, i.e., $\F \subseteq \binom{U}{\leq d}$, and $k \in \N$.}{$k$.}{Is there a $k$-hitting set $S$ for $\F$?}

In the following we present a logspace kernelization algorithm for \dHSk. The space requirement prevents the normal approach of finding sunflowers and modifying the family $\F$ in memory (we are basically left with the ability to have a constant amount of pointers and counters in memory). We start with an intuitive attempt for getting around the space restriction and show how it would fail.

The intuitive (but wrong) approach at a logspace kernelization works as follows. Process the sets $F \in \F$ one at a time and output $F$ unless we find that the subfamily of sets that where processed before $F$ contains a $(k+1)$-flower that enforces some core $C \subseteq F$ to be hit. For a single step $t$, let $\F_t$ be the sets that we have processed so far and let $\F'_t \subseteq \F_t$ be the family of sets in the output. We would like to maintain that a set $S$ is a $k$-hitting set for $\F_t$ if and only if it a $k$-hitting set for $\F'_t$. Now suppose that this holds and we want to show that our procedure preserves this property in step $t+1$ when some set $F$ is processed. This can only fail if we decide to discard $F$, and only in the sense that some $S$ is a $k$-hitting set for $\F'_{t+1}$ but not for $\F_{t+1}$ because $\F'_{t+1}\subseteq\F_{t+1}$. However, $S$ is also a $k$-hitting set for $\F'_t\subseteq \F'_{t+1}$ and, by assumption, also for $\F_t$. Recall that we have discarded $F$ because of a $(k+1)$-flower in $\F_t$ with core $C$, so $S$ must intersect $C$ (or fail to be a $k$-hitting set). Thus, $S$ intersects also $F\supseteq C$, making it a $k$-hitting set for $\F_t\cup\{F\}=\F_{t+1}$. 
Unfortunately, while the correctness proof would be this easy, such a straightforward approach fails as a consequence of the following lemma.\footnote{It is well known that finding a $k$-sunflower is \NP-hard in general. Similarly, finding a $k$-flower is \coNP-hard. For self-contained proofs see Appendix \ref{app:npflow}. Both proofs do not apply when the size of the set family exceeds the bounds in the (sun)flower lemma.}

\begin{lemma}
 Given a family of sets $\F$ of size $d$ and $F \in \F$. The problem of finding an $l$-flower in $\F \setminus F$ with core $C \subseteq F$ is \coNP-hard.
\end{lemma}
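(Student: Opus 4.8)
The plan is to establish \coNP-hardness by a reduction from the \coNP-complete problem of deciding whether a given graph $H=(V,E)$ has \emph{no} vertex cover of size at most $l-1$, equivalently whether the minimum vertex cover of $H$ has size at least $l$. First I would record that, for each fixed $d$, the target problem in fact lies in \coNP: there are only $2^d=\Oh(1)$ candidate cores $C\subseteq F$, and for each of them the flower condition ``every blocking set of $(\F\setminus F)_C$ has at least $l$ elements'' is a \coNP predicate, so the existence of a flower with some core $C\subseteq F$ is a disjunction of constantly many \coNP predicates and hence again in \coNP. The reduction will therefore yield \coNP-completeness.

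For the construction I would fix $d=3$ and introduce three fresh elements $c,p,q\notin V$. For every edge $e_j=\{u_j,v_j\}\in E$ I put the set $\{c,u_j,v_j\}$ into $\F$, and I set $F=\{c,p,q\}\in\F$; every set then has size exactly $d=3$. The guiding idea is that the restriction onto the intended core $C=\{c\}$ reproduces the edge family: since $F$ is the only superset of $\{c\}$ that is removed, we get $(\F\setminus F)_{\{c\}}=E$, whose minimum blocking set is precisely a minimum vertex cover of $H$. Thus $\{c\}$ is the core of an $l$-flower exactly when the minimum vertex cover of $H$ has size at least $l$.

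The key step, and the main obstacle, is to rule out that some \emph{other} core $C'\subseteq F$ accidentally certifies an $l$-flower; the dummy elements $p,q$ are introduced exactly for this. For $C'=\emptyset$ the restriction is all of $\F\setminus F$, which is blocked by the single element $c$, so its minimum blocking set has size $1$. For every core $C'$ that contains $p$ or $q$ but differs from $F$, the only set of $\F$ that could be a superset is $F$ itself, which is excluded, so $(\F\setminus F)_{C'}=\emptyset$ and its minimum blocking set has size $0$. Consequently, for $l\ge 2$ the \emph{only} core that can yield an $l$-flower is $\{c\}$, and I would conclude that $\F\setminus F$ contains an $l$-flower with core $C\subseteq F$ if and only if the minimum vertex cover of $H$ has size at least $l$, which proves \coNP-hardness.

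Finally, in line with the footnote I would remark that this hardness is genuine rather than an artefact of tiny families: the produced instance has $|\F|=|E|+1=\Oh(|V|^2)$ sets, which in the interesting regime $l=\Theta(|V|)$ is far below the flower-lemma threshold $(l-1)^d=\Theta(|V|^3)$, so Lemma~\ref{lem:flower} does not already force a flower to exist. (For any larger fixed $d$ one pads every set with $d-3$ \emph{private} fresh elements; since such elements lie outside $F$ and each occurs in a single set, they neither enlarge the set of candidate cores nor decrease the minimum blocking set of $(\F\setminus F)_{\{c\}}$ below the vertex-cover size, so the same argument goes through.)
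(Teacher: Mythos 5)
Your proof is correct, and it takes a recognizably different route from the paper's, although the underlying mechanism is the same: both reduce from the complement of an NP-complete hitting problem and engineer $F$ so that the question ``is there an $l$-flower in $\F\setminus F$ with core $C\subseteq F$?'' collapses to ``does the source family need at least $l$ elements to hit?''. The paper reduces from \notdHS and takes $F$ to be a fresh set disjoint from every set of $\F$; disjointness forces the only candidate core to be $C=\emptyset$, so the flower question becomes verbatim the question of whether $\F$ has no $k$-hitting set, and the proof is done in three lines. You instead reduce from the complement of \probname{Vertex Cover} (i.e., \probname{not} $2$-\probname{Hitting Set}), lift each edge to a $3$-set via the shared element $c$, and use the dummies $p,q$ to kill every core other than $\{c\}$; your case analysis over the $2^d$ candidate cores replaces the paper's one-line ``disjointness forces $C=\emptyset$'' argument. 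The paper's version buys brevity and uniformity in $d$: it works for every $d\ge 2$, whereas your gadget needs $d\ge 3$ plus padding and does not cover $d=2$ (that case would again need a disjoint-$F$-style argument). Your version buys a somewhat stronger and more informative statement: hardness holds for $d$-uniform families in which the forced core is \emph{non-empty}, you additionally get \coNP-membership and hence completeness for fixed $d$, and you explicitly verify that the produced instances stay below the $(l-1)^d$ threshold of Lemma~\ref{lem:flower}, which is precisely the caveat raised in the paper's footnote. One cosmetic slip in the padding remark: you say the padding elements ``lie outside $F$'', but $F$ itself must also be padded to reach size $d$; this is harmless, since $F$'s private padding elements occur in no set of $\F\setminus F$, so every core containing them still has an empty restriction, exactly as for cores containing $p$ or $q$.
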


\begin{proof}
 We give a reduction from the \coNP-hard \notdHS problem which answers \yes for an instance if it does not have a hitting set of size at most $k$. Given an instance $(U, \F, k)$, we add a set $F$ that is disjoint from any set in $\F$ and ask if $\F$ contains a $(k+1)$-flower with core $C \subseteq F$. We show that this is the case if and only if $\F$ does not have a $k$-hitting set. Suppose that there is a $(k+1)$-flower with core $C \subseteq F$. By construction, $C = \emptyset$ since $F$ is completely disjoint from $\F$. Thus, $\F$ must contain a $(k+1)$-flower with an empty core, i.e., the hitting set size for $\F = \F_\emptyset$ is at least $k+1$. For the converse direction suppose that there is no hitting set of size $k$ for $\F$, i.e., a hitting set for $\F$ requires at least $k+1$ elements. Consequently, $\F$ is a $(k+1)$-flower with core $C = \emptyset \subseteq F$.
\end{proof}

Even if we know that the number of sets that where processed exceed the $k^d$ bound of Lemma~\ref{lem:flower}, finding out whether there is a flower with core $C \subseteq F$ is hard.\footnote{Note that we would run into the same obstacle if we use sunflowers instead of flowers; finding out whether there exists a $(k+1)$-sunflower with core $C \subseteq F$ for a specific set $F$ is \NP-hard as we show in Appendix~\ref{app:npsun}} Instead we use an application of Lemma~\ref{lem:flowercounting} that only ensures that there is some flower with $C \subseteq F$ if two stronger counting conditions are met. Whether condition (1) holds in $\F_t$ can be easily checked, but there is no guarantee that condition (2) holds if $\F_t$ exceeds a certain size bound, i.e., this does not give any guarantee on the size of the output if we process the input once. We fix this by taking a layered approach in which we filter out redundant sets for which there exist flowers that ensure that these sets must be hit, such that in each subsequent layer the size of the cores of these flowers decreases. 

We consider a collection of logspace algorithms $A_0,\ldots, A_d$ and families of sets $\F(0), \ldots, \F(d)$ where $\F(l)$ is the output of algorithm $A_l$. Each of these algorithms \emph{simulates} the next algorithm for decision-making, i.e., if we directly \emph{run} $A_l$, then it simulates $A_{l+1}$ which in turn simulates $A_{l+2}$, etc. If we run $A_l$ however, then it is the only algorithm that outputs sets; each of the algorithms that are being simulated as a result of running $A_l$ does not produce output.

We maintain the invariant that for all $C \subseteq U$ such that $l \leq |C| \leq d$, the family $\F(l)$ contains at most $(k+1)^{d-|C|}$ supersets of $C$. Each algorithm $A_l$ processes sets in $\F$ one at a time. For a single step $t$, let $\F_{t}$ be the sets that have been processed so far and let $\F_t(l)$ denote the sets in the output. Note that $A_l$ will not have access to its own output $\F_t(l)$ but we use it for analysis. 

Let us first describe how algorithm $A_d$ processes set $F$ in step $t+1$. If $F \notin \F_t$, then $A_d$ decides to output $F$; in the other case it proceeds with the next step. In other words $A_d$ is a simple algorithm that outputs a single copy of every set in $\F$. This ensures that the kernelization is robust for hitting set instances where multiple copies of a single set appear. If we are guaranteed that this is not the case, then simply outputting each set $F$ suffices. Clearly the invariant holds for $\F(d)$ since any $C \in \binom{U}{\geq d}$ only has $F = C$ as a superset and there is at most $(k+1)^0 = 1$ copy of each set in $\F(d)$. 

For $0 \leq l < d$ the procedure in Algorithm~\ref{alg:al} describes how $A_l$ processes $F$ in step $t+1$. First observe that lines 2 and 3 ensure that $\F(l) \subseteq \F(l+1)$. Assuming that the invariant holds for $\F(l+1), \ldots,\F(d)$, lines 8 and 9 ensure that the invariant is maintained for $\F(l)$. Crucially, we only need make sure that it additionally holds for $C \in \binom{U}{l}$ since larger cores are covered by the invariant for $\F(l+1)$.

\begin{algorithm} [t] \label{alg:al}
 simulate $A_{l+1}$ up to step $t+1$\;
 
 \eIf{$A_{l+1}$ decides not to output $F$}
  {
    do not output $F$ and end the computation for step $t+1$\;
  }
  {
    \ForEach{$C \subseteq F$ with $|C| = l$}
    {
      simulate $A_{l+1}$ up to step $t$\;
      count the number of supersets of $C$ that $A_{l+1}$ would output\;
      \If{the result is at least $(k+1)^{d-|C|}$}
      {
	do not output $F$ and end the computation for step $t+1$\;
      }
    }
    output $F$\;
  }
  \caption{Step $t+1$ of $A_l$, $0 \leq l < d$.}
\end{algorithm}

\begin{observation} \label{obs:logdhsk}
 $A_0$ commits at most $(k+1)^d$ sets to the output during the computation. This follows from the invariant for $\F(0)$ when considering $C = \emptyset$ which is in $\binom{U}{\geq l}$ for $l = 0$.
\end{observation}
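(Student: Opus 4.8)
The plan is to derive the bound by directly instantiating the already-maintained invariant at the smallest possible core, exactly as the statement indicates; no fresh induction is needed, since the invariant (and the argument that each layer $A_d, A_{d-1}, \ldots, A_0$ preserves it) has been set up in the preceding paragraphs and may be assumed here. First I would recall that $\F(0)$ is by definition the output family of $A_0$, and that the invariant specialized to the bottom layer $l=0$ reads: for every $C \subseteq U$ with $0 \leq |C| \leq d$, the family $\F(0)$ contains at most $(k+1)^{d-|C|}$ supersets of $C$.

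The one point to verify is that $\emptyset$ is a legal instantiation of the quantified core. Because we are at $l=0$, the range $l \leq |C| \leq d$ becomes $0 \leq |C| \leq d$, i.e.\ $C \in \binom{U}{\geq 0}$, and $|\emptyset| = 0$ satisfies it; this is precisely the remark that $\emptyset \in \binom{U}{\geq l}$ for $l=0$. Next I would observe that every set in $\F(0)$ is trivially a superset of $\emptyset$, so the quantity ``number of supersets of $\emptyset$ in $\F(0)$'' is just $|\F(0)|$ itself. Reading the invariant with $C = \emptyset$ then gives $|\F(0)| \leq (k+1)^{d-|\emptyset|} = (k+1)^d$.

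Finally I would account for multiplicity, to conclude the bound governs the number of output \emph{commitments} and not merely the number of distinct sets. This is immediate from the chain $\F(0) \subseteq \F(1) \subseteq \cdots \subseteq \F(d)$ established by lines~2--3, together with the fact that $A_d$ emits a single copy of each set: hence $\F(0)$ contains no duplicates, each committed set is committed exactly once, and the number of commitments equals $|\F(0)| \leq (k+1)^d$. I expect no genuine obstacle here, as the full content of the observation is this specialization of the invariant; the only care required is the bookkeeping above, namely confirming that $\emptyset$ lies in the invariant's range and that deduplication makes the commitment count coincide with $|\F(0)|$.
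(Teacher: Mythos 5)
Your proposal is correct and matches the paper's own (very short) justification: instantiate the invariant for $\F(0)$ at $C=\emptyset$, note that $\emptyset$ lies in the range $0\leq |C|\leq d$ and that every set is a superset of $\emptyset$, giving $|\F(0)|\leq (k+1)^{d}$. Your extra bookkeeping about duplicates (the chain $\F(0)\subseteq\cdots\subseteq\F(d)$ and the deduplication performed by $A_d$, which makes commitments coincide with $|\F(0)|$) is a harmless elaboration the paper leaves implicit, not a different approach.
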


\begin{lemma} \label{prop:logdhsk}
 For $0 \leq l \leq d$, $A_l$ can be implemented such that it uses logarithmic space and runs in $\Oh(|\F|^{d-l+2})$ time.
\end{lemma}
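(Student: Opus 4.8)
The plan is to prove both assertions simultaneously by induction on $d-l$, exploiting the recursive structure of Algorithm~\ref{alg:al}. The guiding principle is that, to economize on space, no algorithm ever stores the internal state or the output of the algorithm it simulates; instead $A_l$ re-runs $A_{l+1}$ from scratch each time it needs a decision. This keeps the private workspace at every level down to a constant number of pointers and counters, at the price of a polynomial blow-up in time that the recursion will bookkeep.

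For the space bound I would start with the base case $A_d$: to handle a set $F$ in step $t+1$ it needs only a pointer to $F$, a pointer scanning the earlier sets of $\F_t$ to detect duplicates, and a constant number of auxiliary cells for the equality test; since $d$ is constant each set has constant description length, so this is $\Oh(\log(n+m))$ bits. For $0 \le l < d$, the only state $A_l$ owns is a pointer to the current step, a pointer ranging over the $\binom{d}{l}=\Oh(1)$ size-$l$ subsets $C\subseteq F$, and a superset counter; the counter never exceeds $|\F|$ and is compared against $(k+1)^{d-|C|}$, which needs only $\Oh(\log k)$ bits, so all of this fits in $\Oh(\log(n+m+k))$ bits. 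Everything else is delegated to a single live simulation of $A_{l+1}$, whose workspace $A_l$ reuses across its sequentially executed simulation calls. Writing $S(l)$ for the space of $A_l$, this gives $S(l)=S(l+1)+\Oh(\log(n+m+k))$, and since there are at most $d+1$ levels with $d$ constant, induction yields $S(l)=\Oh(\log(n+m+k))$, i.e.\ logarithmic space.

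For the running time let $T(l)$ be the time to run $A_l$ on all of $\F$, and observe that simulating $A_{l+1}$ through any prefix of the steps costs at most $T(l+1)$, since this time is monotone in the number of processed sets. The base case spends $\Oh(|\F|)$ per step (scanning $\F_t$ for a duplicate of $F$, at $\Oh(1)$ per comparison), so $T(d)=\Oh(|\F|^2)$, matching $\Oh(|\F|^{d-l+2})$ at $l=d$. For $l<d$, each of the $|\F|$ steps launches one simulation of $A_{l+1}$ to decide whether $F$ is output, followed by at most $\binom{d}{l}=\Oh(1)$ further simulations, one per candidate core $C$, during which superset counting adds only $\Oh(1)$ per observed output; hence a single step costs $\Oh(T(l+1))$ and $T(l)=\Oh(|\F|\cdot T(l+1))$. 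Unrolling against the base case gives $T(l)=\Oh(|\F|^{d-l}\cdot T(d))=\Oh(|\F|^{d-l+2})$, as required.

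The step I expect to be the main obstacle is keeping the space accounting honest through the nested simulations: one must argue that although $A_l$ may invoke $A_{l+1}$ repeatedly, these calls are sequential so only one is ever live, and that the further nesting of $A_{l+1}$ into $A_{l+2}$ compounds only additively over the $\Oh(1)$ levels rather than multiplicatively. Pinning down that each level contributes merely $\Oh(\log(n+m+k))$ private cells---and that re-running rather than caching the simulated algorithm is exactly what trades the extra factor of $|\F|$ in time for the constant factor in space---is the crux; the time recurrence is then a routine consequence of the same re-run-from-scratch discipline.
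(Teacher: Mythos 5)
Your proposal is correct and follows essentially the same route as the paper's proof: space is bounded by observing that only the $\Oh(d)$ nesting levels are ever live simultaneously, each owning a constant number of logarithmic-size counters (step pointer, subset iterator, superset counter), and time follows from the recurrence $T(d)=\Oh(|\F|^2)$, $T(l)=\Oh(|\F|\cdot T(l+1))$, unrolled over the constantly many levels. The only nitpick is your remark that the superset counter ``never exceeds $|\F|$'' yet ``needs only $\Oh(\log k)$ bits''---to justify the latter one should cap the counter at $(k+1)^{d-|C|}$ (the count need never go beyond the threshold), as the paper does; this does not affect your final bound since you allow $\Oh(\log(n+m+k))$ bits anyway.
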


\begin{proof}
 Upon running $A_l$, at most $d-l$ algorithms (one instance of each $A_{l+1}, \ldots, A_d$) are being simulated at any given time, i.e., when we run $A_l$ at most $d-l+1$ algorithms actively require space for computation. In order to iterate over a family $\F$, a counter can be used to track progress, using $\log |\F|$~bits. Each algorithm can use such a counter to keep track of its current step. Let us assume that the elements in $U$ are represented as integers~$\{1, \ldots, |U|\}$. This enables us to iterate over sets~$C \subseteq F$ using a counter which takes $d \log |U|$~bits of space. Set comparison and verifying containment of a set~$C$ in a set in $\F$ can be done using constant space since the sets to be considered have cardinality at most $d$. Finally, each algorithm requires at most $\log ((k+1)^d) = d \log (k+1)$~bits to count these sets, where $k < |U|$ in any non-trivial instance. 
 
 Let us now analyze the running time. Clearly, $A_d$ runs in time $\Oh(|\F|^2$). There are at most $2^d$ subsets~$C \subseteq F$ for any set $F$ of size $d$. Thus, for $l \leq i < d$, each $A_i$ consults $A_{i+1}$ a total of $\Oh(|\F|)$ times during its computation (at most $1 + 2^d$ times in each step). All other operations take constant time, thus $A_l$ runs in time $\Oh(|\F|^{d-l}\cdot|\F|^2) = \Oh(|\F|^{d-l+2})$.
\end{proof}

Let us remark that we could also store each $C \subseteq F$ in line 5, allocate a counter for each of these sets, and simulate $A_{l+1}$ only once instead of starting a new simulation for each subset. This gives us a constant factor trade-off in running time versus space complexity. One might also consider a hybrid approach, e.g., by checking $x$ subsets of $F$ at a time. 

We will now proceed with a proof of correctness by showing that the answer to \dHSk is preserved in each layer $\F(0) \subseteq \F(1) \subseteq \ldots \subseteq \F(d)$.

\begin{lemma} \label{lem:logdhsk}
 Let $0 \leq l < d$ and let $S$ be a set of size at most $k$. It holds that $S$ is a hitting set for $\F(l)$ if and only if it is a hitting set for $\F(l+1)$.
\end{lemma}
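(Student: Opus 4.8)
The plan is to prove the two directions separately, with essentially all the work going into the forward implication. The backward direction is immediate from the containment $\F(l)\subseteq\F(l+1)$ guaranteed by lines~2--3 of Algorithm~\ref{alg:al}: if $S$ hits every set of $\F(l+1)$, then it in particular hits every set of the subfamily $\F(l)$. So I would dispatch that in one line and move on.

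For the forward direction I would \emph{not} try to exhibit a $(k+1)$-flower directly inside $\F(l)$. The tempting route---apply Lemma~\ref{lem:flowercounting} to $\F(l)$ with the core $C$ that caused $F$ to be discarded---founders because the maintained invariant gives only an \emph{upper} bound of $(k+1)^{d-|C|}$ on the number of supersets of $C$ in $\F(l)$, whereas condition~(1) of Lemma~\ref{lem:flowercounting} needs a matching \emph{lower} bound, and $A_l$ may well have discarded supersets of $C$ on account of \emph{other} cores. The flower is instead only guaranteed to live in $\F(l+1)$, so the difficulty is that I must make $S$ hit a flower sitting in the larger family while only knowing that $S$ hits $\F(l)$. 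This looks circular---it is exactly the gap that sinks the naive one-pass algorithm---and resolving it is the heart of the proof.

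The way around it is to strengthen the statement so that it holds at every step and to induct on the step counter $t$, proving that $S$ hits $\F_t(l)$ if and only if $S$ hits $\F_t(l+1)$. The payoff of this strengthening is that the induction hypothesis hands me ``$S$ hits $\F_t(l+1)$'' precisely when I need it. The base case $t=0$ is trivial. For the step processing the $(t{+}1)$-st set $F$, lines~2--3 leave only three possibilities: both algorithms keep $F$, neither keeps $F$, or $A_{l+1}$ keeps $F$ while $A_l$ discards it. In the first two cases the two output families change in lockstep and the equivalence is inherited from the hypothesis. The only real case is the third, where the backward direction is again free, so I assume $S$ hits $\F_t(l)=\F_{t+1}(l)$, obtain ``$S$ hits $\F_t(l+1)$'' from the induction hypothesis, and must show that $S$ hits $F$.

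Here I would invoke the discard condition: $A_l$ dropped $F$ because some core $C\subseteq F$ with $|C|=l$ already had at least $(k+1)^{d-|C|}$ supersets in $\F_t(l+1)$. This is exactly condition~(1) of Lemma~\ref{lem:flowercounting} applied to the family $\F_t(l+1)$ with flower parameter $k+1$, while condition~(2) follows because every $C'\supsetneq C$ has $|C'|\ge l+1$ and the invariant bounds the number of supersets of such $C'$ in $\F(l+1)\supseteq\F_t(l+1)$ by $(k+1)^{d-|C'|}$. The lemma then yields either $C\in\F_t(l+1)$ or a $(k+1)$-flower with core $C$ inside $\F_t(l+1)$. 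In the first case $S$ hits $C\subseteq F$ since $S$ hits $\F_t(l+1)$. In the second, if $S$ missed $C$ entirely then, using again that $S$ hits every superset of $C$ in $\F_t(l+1)$, the at most $k$ elements of $S$ would form a blocking set for $(\F_t(l+1))_C$, contradicting the flower's blocking number of at least $k+1$; hence $S\cap C\neq\emptyset$ and $S$ hits $F$. Taking $t=m=|\F|$ recovers the claimed equivalence between $\F(l)$ and $\F(l+1)$. The main obstacle, as flagged above, is the circularity, and the whole trick is that inducting on $t$ rather than reasoning about the final families turns it into a clean application of the counting lemma.
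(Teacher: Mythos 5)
Your proof is correct and follows essentially the same route as the paper's: induction over the processing steps $t$, with the only nontrivial case (where $A_{l+1}$ keeps $F$ but $A_l$ discards it) resolved by applying Lemma~\ref{lem:flowercounting} to $\F_t(l+1)$ --- using the discard condition for condition~(1) and the invariant for $\F(l+1)$ together with $\F_t(l+1)\subseteq\F(l+1)$ for condition~(2) --- so that the induction hypothesis supplies exactly the fact that $S$ hits $\F_t(l+1)$ and hence must hit the core $C\subseteq F$. The only cosmetic difference is that you spell out the blocking-set contradiction in the flower case explicitly, which the paper states more tersely.
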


\begin{proof}
 For each $0 \leq l < d$, we prove by induction over $0 \leq t \leq m$ that a set $S$ is a $k$-hitting set for $\F_t(l)$ if and only if it is a $k$-hitting set for $\F_t(l+1)$. This proves the lemma since $\F_m(l) = \F(l)$ and $\F_m(l+1) = \F(l+1)$. For $t = 0$ we have $\F_0(l) = \F_0(l+1) = \emptyset$ and the statement obviously holds. Let us assume that it holds for steps $t \leq i$ and prove that it also holds for step $i+1$. One direction is trivial: If there is a $k$-hitting set $S$ for $\F_{i+1}(l+1)$, then $S$ is also a hitting set for $\F_{i+1}(l) \subseteq \F_{i+1}(l+1)$. 
 
 For the converse direction let us suppose that a set $S$ is a $k$-hitting set for $\F_{i+1}(l)$. We must show that $S$ is also a hitting set for $\F_{i+1}(l+1)$. Let $F$ be the set that is processed in step $i+1$ of $A_l$. If $A_l$ decides to output $F$, then we have $\F_{i+1}(l) = \F_i(l) \cup \{F\}$, i.e., $S$ must hit $\F_i(l) \cup \{F\}$ and thus by the induction hypothesis it must also hit $\F_i(l+1) \cup \{F\} = \F_{i+1}(l+1)$.
 
 Now suppose that $A_l$ does not output $F$, i.e., $F \notin \F_{i+1}(l)$. First let us consider the easy case where $A_l$ decides not to output $F$ because $A_{l+1}$ decided not to output $F$. Thus, $F \notin \F_{i+1}(l+1)$ and $\F_{i+1}(l+1) = \F_{i}(l+1)$. A $k$-hitting set $S$ for $\F_{i+1}(l)$ hits $\F_i(l)$ and by the induction hypothesis it must hit $\F_i(l+1)$; we have that $S$ also hits $\F_{i+1}(l+1) = \F_{i}(l+1)$.
  
 In the other case we know that $A_l$ decides not to output $F$ because it established that there are at least $(k+1)^{d-|C|}$ supersets of some $C \subseteq F$ with $|C| = l$ in $\F_i(l+1)$. Furthermore, by the invariant for $\F(l+1)$ we have that for all sets $C'$ that are larger than $C$ there are at most $(k+1)^{d-|C|'}$ supersets $F' \subseteq C'$ in $\F_i(l+1) \subseteq \F(l+1)$. Consequently, by Lemma~\ref{lem:flowercounting} we have that $\F_i(l+1)$ contains $C$ or a $(k+1)$-flower with core $C$. Thus, in the first case any hitting set for $\F_i(l+1)$ must hit $C$, and in the second case any hitting set for $\F_i(l+1)$ requires at least $k+1$ elements if it avoids hitting $C$; in other words, a hitting set of size at most $k$ must hit $C$. Since $S$ is a $k$-hitting set for $\F_{i+1}(l)$ it must also hit $\F_{i}(l) \subseteq \F_{i+1}(l)$ and by the induction hypothesis we have that $S$ is also a $k$-hitting set for $\F_i(l+1)$. We have just established that $S$ must hit $C$ in order to hit $\F_i(l+1)$ since it has cardinality at most $k$. Thus, $S$ also hits $F \supseteq C$ and therefore $S$ is a hitting set for $\F_{i+1}(l+1) = \F_{i}(l+1) \cup \{F\}$.
\end{proof}
 
It is easy to see that a set $S$ is a $k$-hitting set for $\F(d)$ if and only if it is a hitting set for $\F$, because $A_d$ only discards duplicate sets. As a consequence of Lemma~\ref{lem:logdhsk}, a set $S$ is a $k$-hitting set for $\F(0)$ if and only if it is a $k$-hitting set for $\F(d)$. Therefore, it follows from Observation~\ref{obs:logdhsk} and Lemma~\ref{prop:logdhsk} that $A_0$ is a logspace kernelization algorithm for \dHSk.
 
\begin{theorem}
 \dHSk admits a logspace kernelization that runs in time $\Oh(|\F|^{d+2})$ and returns an equivalent instance with at most $(k+1)^d$ sets.
\end{theorem}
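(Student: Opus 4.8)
The plan is to take $A_0$ as the claimed kernelization algorithm and to verify its three required properties---equivalence, size, and logspace/time complexity---by assembling the results already established, since the genuinely hard work (the counting argument of Lemma~\ref{lem:flowercounting} and the layered simulation) is already behind us. First I would settle equivalence. By Lemma~\ref{lem:logdhsk}, for every $0 \le l < d$ a set $S$ of size at most $k$ is a hitting set for $\F(l)$ if and only if it is a hitting set for $\F(l+1)$. Chaining these equivalences across $l = 0, 1, \ldots, d-1$ shows that $S$ is a $k$-hitting set for $\F(0)$ if and only if it is a $k$-hitting set for $\F(d)$. Since $A_d$ merely discards duplicate sets, the families $\F(d)$ and $\F$ have exactly the same hitting sets of every size, so $S$ is a $k$-hitting set for $\F(0)$ iff it is one for $\F$. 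As $A_0$ never alters $k$, we have $k' = k$, and the input and output instances are equivalent with respect to \dHSk.

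Next I would address the size. By Observation~\ref{obs:logdhsk}, $A_0$ commits at most $(k+1)^d$ sets to its output; this is exactly the invariant for $\F(0)$ applied to $C = \emptyset$. Since $\F(0) \subseteq \F$, every output set has cardinality at most $d$, so the output family mentions at most $d(k+1)^d$ element occurrences. Thus the number of sets is bounded by $(k+1)^d$ and the total number of occurrences by $d(k+1)^d$, both functions of $k$ alone. For the complexity, Lemma~\ref{prop:logdhsk} specialized to $l = 0$ states directly that $A_0$ can be implemented to use logarithmic space and to run in time $\Oh(|\F|^{d-0+2}) = \Oh(|\F|^{d+2})$.

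The one remaining point---and the step I expect to require the most care---is ensuring that the \emph{encoding length} of the output is genuinely bounded by a function of $k$, as demanded by the definition of a kernelization. The elements retained by $A_0$ still carry their original names from $U$, each of which may require $\Theta(\log n)$ bits, so a literal transcription would have size depending on $n$ rather than on $k$. To obtain a true kernel I would relabel the surviving universe: to each element $u$ occurring in some output set, assign the new name equal to the number of distinct retained elements that are strictly smaller than $u$. This relabeling leaves the incidence structure of $\F(0)$ unchanged, hence preserves equivalence, and it can be carried out within the logspace budget using nested counters---an outer pointer ranging over the elements of the output together with an inner count of smaller retained elements---so that the overall procedure remains logspace. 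After relabeling, the output is an instance with at most $(k+1)^d$ sets over a universe of size at most $d(k+1)^d$, which together with the equivalence, space, and time bounds above yields the theorem.
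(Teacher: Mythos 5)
Your proposal is correct and follows essentially the same route as the paper: equivalence by chaining Lemma~\ref{lem:logdhsk} over $l=0,\ldots,d-1$ together with the fact that $A_d$ only removes duplicates, the size bound from Observation~\ref{obs:logdhsk}, and the space/time bounds from Lemma~\ref{prop:logdhsk} at $l=0$. Your final relabeling of the ground set is also the paper's own fix (stated there as a remark after the theorem, using order of first occurrence in the output rather than rank in the natural order, both implementable by nested logspace simulation), so you have merely folded into the proof proper a step the paper treats separately.
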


This kernelization is expressive; indeed, a subset of the input family is returned in the reduced instance and all minimal solutions up to size at most $k$ are preserved (the latter is consequence of any set $S$ of size $k$ being a hitting set for $\F(0)$ if and only if $S$ is a hitting set for $\F$). Let us remark that technically we still have to reduce the ground set to size polynomial in $k$. We can reduce the ground set of the output instance to at most $d(k+1)^d$ elements by including one more layer. Let $A_{d+1}$ be an algorithm that simulates $A_d$. Each time that $A_d$ decides to output a set $F$, algorithm $A_{d+1}$ determines the new identifier of each element $e$ in $F$ by counting the number of distinct elements that have been output by $A_d$ before the first occurrence of $e$. This can be done by simulating $A_d$ up to first step in which $A_d$ outputs $e$ by incrementing a counter each time an element is output for the first time (whether an element occurs for the first time can again be verified via simulation of $A_d$). We can take the same approach for the other logspace kernelizations given in this paper (either for ground sets or vertices).

\section{Logspace kernel for Set Packing} \label{sec:logdspk}
\introduceparameterizedproblem{\dSPk}{A set $U$ and a family $\F$ of subsets of $U$ each of size at most $d$, i.e., $\F \subseteq \binom{U}{\leq d}$, and $k \in \N$.}{$k$.}{Is there a $k$-packing $\Pee \subseteq \F$?}

In this section we present a logspace kernelization algorithm for \dSPk. The strategy for obtaining such a kernelization is similar to that in Section~\ref{sec:logdhsk}. However, the correctness proof gets more complicated. We point out the main differences.

We consider a collection of logspace algorithms $B_0, \ldots, B_d$ that perform almost the same steps as the collection of algorithms described in the logspace kernelization for \dHSk such that only the invariant differs. For each $0 \leq l \leq d$ we maintain that for all $C \subseteq U$ such that $l \leq |C| \leq d$, the family $\F(l)$ that is produced by $B_l$ contains at most $(d(k-1)+1)^{d-|C|}$ supersets of $C$. 

\begin{observation} \label{obs:logdsmk}
 $B_0$ commits at most $(d(k-1)+1)^d$ sets to the output during the computation. This follows from the invariant for $\F(0)$ when considering $C = \emptyset$ which is in $\binom{U}{\geq l}$ for $l = 0$.
\end{observation}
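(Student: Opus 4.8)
The plan is to prove Observation~\ref{obs:logdsmk} exactly as its counterpart Observation~\ref{obs:logdhsk} was established in the hitting set section: it is an immediate consequence of the stated invariant for $\F(0)$, specialized to the empty core.

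First I would recall the invariant maintained by the collection $B_0,\ldots,B_d$: for every $0 \leq l \leq d$ and every $C \subseteq U$ with $l \leq |C| \leq d$, the output family $\F(l)$ contains at most $(d(k-1)+1)^{d-|C|}$ supersets of $C$. The key step is simply to instantiate this invariant at $l = 0$ and $C = \emptyset$. The constraint $l \leq |C|$ becomes $0 \leq |C|$, which is satisfied by $C = \emptyset$, so the empty set is a legitimate choice of core at layer $0$.

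Next I would observe that every set in $\F(0)$ is (trivially) a superset of the empty set, so ``the number of supersets of $\emptyset$ in $\F(0)$'' is exactly $|\F(0)|$, the total number of sets committed to the output by $B_0$. Plugging $|C| = 0$ into the invariant's bound gives $(d(k-1)+1)^{d - 0} = (d(k-1)+1)^d$. Hence $|\F(0)| \leq (d(k-1)+1)^d$, which is precisely the claim.

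I do not anticipate a genuine obstacle here, since the observation is a one-line corollary of an invariant that the paper has already committed to maintaining; the real work lies elsewhere. Specifically, the substantive content is in proving that the algorithms $B_l$ actually preserve this invariant as they process sets one at a time (the analogue of the invariant-maintenance argument via lines~8--9 of Algorithm~\ref{alg:al} and Lemma~\ref{lem:flowercounting}), and separately in proving that the answer to \dSPk is preserved across the layers $\F(0) \subseteq \cdots \subseteq \F(d)$ — the latter being where the paper warns that ``the correctness proof gets more complicated.'' The counting threshold $(d(k-1)+1)$ rather than $(k+1)$ reflects the packing-specific fact, noted earlier in the excerpt, that a $k$-packing of $d$-sets can meet at most $d(k-1)$ petals of a flower while avoiding its core, so that a flower certifies the core must be hit by any feasible packing; but verifying that tighter reasoning belongs to the forthcoming correctness lemma, not to this observation, whose proof is exactly the substitution $C=\emptyset$ into the invariant.
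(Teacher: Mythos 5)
Your proof is correct and follows exactly the paper's reasoning: the observation is the invariant for $\F(0)$ instantiated at $C = \emptyset$, whose superset count is all of $|\F(0)|$, giving the bound $(d(k-1)+1)^{d-0} = (d(k-1)+1)^d$. Your remark that the real work lies in maintaining the invariant and in the correctness lemma (Lemma~\ref{lem:logdsmk}) also matches the paper's structure.
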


 Analogous to Lemma~\ref{prop:logdhsk} we obtain the following.

\begin{lemma} \label{prop:logdsmk}
 For $0 \leq l \leq d$, $B_l$ can be implemented such that it uses logarithmic space and runs in $\Oh(|\F|^{d-l+2})$ time.
\end{lemma}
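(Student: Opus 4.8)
The plan is to mirror the analysis of Lemma~\ref{prop:logdhsk} essentially verbatim, since the observation immediately preceding this lemma states that the algorithms $B_0,\ldots,B_d$ ``perform almost the same steps'' as $A_0,\ldots,A_d$ and differ only in the numerical threshold used in the counting invariant. Concretely, the only change is that the threshold $(k+1)^{d-|C|}$ from Algorithm~\ref{alg:al} is replaced by $(d(k-1)+1)^{d-|C|}$. Since the control flow, the recursion depth, and the set of operations performed are structurally identical, the space and time accounting carries over.

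First I would argue the space bound. The recursion nesting is unchanged: running $B_l$ activates simulations of $B_{l+1},\ldots,B_d$, so at most $d-l+1$ algorithms require working space simultaneously, a constant number for fixed $d$. Each active algorithm needs a step counter of $\log|\F|$ bits, a subset-iteration counter of at most $d\log|U|$ bits to range over the $C\subseteq F$ with $|C|=l$, and a superset-counting register. The only difference from the hitting-set case is the magnitude of the latter counter: here it must reach $(d(k-1)+1)^{d-|C|}$ rather than $(k+1)^{d-|C|}$, so it uses at most $\log((d(k-1)+1)^d)=d\log(d(k-1)+1)$ bits. For fixed $d$ this is still $\Oh(\log k)=\Oh(\log|U|)$ (using $k<|U|$ in any nontrivial instance), so the total is $\Oh(\log|\F|+\log|U|)=\Oh(\log(|\F|+|U|))$ bits, which is logarithmic in the input size. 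Set comparison and containment checks again take constant space since all sets have size at most~$d$.

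Next I would argue the time bound by induction on decreasing $l$, exactly as in Lemma~\ref{prop:logdhsk}. The base case $B_d$ is a simple duplicate-removal pass running in $\Oh(|\F|^2)$ time. For $l<d$, each of the $m=|\F|$ steps of $B_l$ invokes $B_{l+1}$ at most $1+2^d$ times (once to decide whether $F$ is output by $B_{l+1}$, and once per subset $C\subseteq F$ with $|C|=l$, of which there are at most $2^d$), with all bookkeeping between invocations taking constant time for fixed $d$. Hence $B_l$ calls $B_{l+1}$ a total of $\Oh(|\F|)$ times, giving the recurrence $T_l=\Oh(|\F|)\cdot T_{l+1}$ with $T_d=\Oh(|\F|^2)$, which solves to $T_l=\Oh(|\F|^{d-l}\cdot|\F|^2)=\Oh(|\F|^{d-l+2})$, as claimed.

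I do not expect a genuine obstacle here, since no property of the specific threshold value enters the resource analysis---the bounds depend only on the recursion structure, on the $O(1)$-per-step overhead, and on the fact that the counting register fits in $\Oh(\log k)$ bits, all of which are insensitive to replacing $(k+1)$ by $(d(k-1)+1)$. The one point worth flagging explicitly is that $d$ is treated as a constant throughout (so factors like $2^d$ and $d\log(\cdot)$ are absorbed), which is consistent with how $d$ is handled in the hitting-set section. The substantive correctness argument---showing that this invariant actually preserves the answer to \dSPk, where a $k$-packing can meet at most $d(k-1)$ petals of a flower while avoiding the core---is a separate matter and is, as the text warns, where the real difficulty lies; but that is deferred to the correctness lemma and is not part of establishing the present resource bounds.
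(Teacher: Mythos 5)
Your proposal is correct and matches the paper's intended argument exactly: the paper proves this lemma simply by declaring it analogous to Lemma~\ref{prop:logdhsk}, and your write-up is precisely that analysis carried over, with the only change being that the counting register must now hold values up to $(d(k-1)+1)^{d-|C|}$, which still fits in $\Oh(\log|U|)$ bits for constant $d$. Your closing observation that the threshold value plays no role in the resource analysis (only in the separate correctness lemma) is exactly the point the paper relies on.
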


The strategy for the proof of correctness is similar to that of Lemma~\ref{lem:logdhsk}. However, we need a slightly stronger induction hypothesis to account for the behavior of a solution for \dSPk since it is a subset of the considered family.

\begin{lemma} \label{lem:logdsmk}
 For $0 \leq l < d$, it holds that $\F(l)$ contains a packing $\Pee$ of size $k$ if and only if $\F(l+1)$ contains a packing $\Pee'$ of size at most $k$.
\end{lemma}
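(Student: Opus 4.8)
The plan is to prove this by induction over the processing steps $0 \leq t \leq m$, mirroring the structure of the proof of Lemma~\ref{lem:logdhsk} but with the modifications forced by packings being subfamilies rather than hitting sets. Concretely, I would establish that for each step $t$, the family $\F_t(l)$ contains a $k$-packing if and only if $\F_t(l+1)$ contains a packing of size at most $k$; the lemma then follows by taking $t = m$. One direction is again immediate: since $\F_t(l) \subseteq \F_t(l+1)$, any $k$-packing in $\F_t(l)$ is a packing of size at most $k$ in $\F_t(l+1)$. The substantive content is the converse, and this is where the statement's asymmetry (``size $k$'' on one side, ``size at most $k$'' on the other) becomes essential: discarding a set may destroy a packing that \emph{uses} that set, so we cannot hope to preserve packings of size exactly $k$, only to certify that \emph{some} packing of size at most $k$ in the larger family can be replaced by one in the smaller family.

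For the converse direction I would follow the induction and focus on the step $i+1$ in which a set $F$ is processed and $B_l$ decides \emph{not} to output $F$ because it found at least $(d(k-1)+1)^{d-|C|}$ supersets of some $C \subseteq F$ with $|C| = l$ in $\F_i(l+1)$. Suppose $\Pee'$ is a packing of size at most $k$ in $\F_{i+1}(l+1) = \F_i(l+1) \cup \{F\}$. If $F \notin \Pee'$, then $\Pee'$ already lives in $\F_i(l+1)$ and the induction hypothesis gives a $k$-packing in $\F_i(l) \subseteq \F_{i+1}(l)$, which is what we want. The interesting case is $F \in \Pee'$. Here I would invoke the counting invariant exactly as in the hitting-set argument: by property~(2) (the invariant for $\F(l+1)$ at cores strictly larger than $C$) together with property~(1), Lemma~\ref{lem:flowercounting} guarantees that $\F_i(l+1)$ either contains $C$ or a $(k+1)$-flower with core $C$. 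The key combinatorial observation, already flagged in the introduction's discussion of \dSPk (following Dell and Marx~\cite{dell2012kernelization}), is that a packing of size at most $k-1$ consisting of sets of size at most $d$ can intersect the core $C$ in at most $d(k-1)$ of its ``outer parts,'' so among the many supersets of $C$ one can always find a superset $F^*$ whose part outside $C$ is disjoint from all the other sets of $\Pee' \setminus \{F\}$.

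The heart of the argument is therefore a \emph{swapping} step: remove $F$ from $\Pee'$ and add in a suitable superset $F^*$ of $C$ drawn from $\F_i(l+1)$ that is disjoint from the remaining $k-1$ sets of $\Pee'$. This produces a packing $\Pee''$ of size at most $k$ that avoids $F$ and hence lies in $\F_i(l+1)$, at which point the induction hypothesis applies to yield a $k$-packing in $\F_i(l)$. The main obstacle I anticipate is making the flower-based disjointness count fully rigorous: I must argue that the $k-1$ sets of $\Pee' \setminus \{F\}$ can block at most $d(k-1)$ of the supersets of $C$ (each remaining set has at most $d$ elements, each of which can coincide with the distinguished element witnessing a superset's membership in the flower structure), and then confront the precise bound $(d(k-1)+1)^{d-|C|}$ supplied by property~(1) against this blocking count to guarantee that an unblocked superset $F^*$ survives. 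Care is needed because Lemma~\ref{lem:flowercounting} only asserts that the restriction $\F_C$ has hitting set size at least $k+1$, so I would translate ``large hitting set'' into ``a sufficiently rich family of disjoint-outside-$C$ options'' via the flower definition, and also handle the degenerate branch $C \in \F_i(l+1)$ separately, where $C$ itself (a subset of $F$, hence a single set of size $l$) serves as the swapped-in member provided it is disjoint from $\Pee' \setminus \{F\}$.
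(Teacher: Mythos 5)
Your swap argument for the case where $B_l$ \emph{discards} $F$ matches the paper's: the other sets of the packing occupy at most $d(k-1)$ elements and avoid $C \subseteq F$, so they cannot form a blocking set for the restriction of $\F_i(l+1)$ onto $C$, which by Lemma~\ref{lem:flowercounting} needs at least $d(k-1)+1$ elements; hence some superset $F^* \supseteq C$ in $\F_i(l+1)$ is disjoint from all of them and can replace $F$. (Your ``$(k+1)$-flower'' should be a $(d(k-1)+1)$-flower --- that larger threshold is the entire point of the packing invariant --- though your later counting against $(d(k-1)+1)^{d-|C|}$ shows you understand this.) The genuine gap is elsewhere: your induction hypothesis is too weak to close the induction, and the case you skip is exactly where it breaks. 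You never treat the case where $B_l$ \emph{does} output $F$ and $F \in \Pee'$. There, $\Pee' \setminus \{F\}$ is a $(k-1)$-packing in $\F_i(l+1)$ disjoint from $F$, and to finish you need a $(k-1)$-packing in $\F_i(l)$ that is \emph{also disjoint from $F$}, so that $F$ can be added back. A plain ``packing exists iff packing exists'' hypothesis --- even stated for all sizes $j \leq k$ --- returns a packing in $\F_i(l)$ with no control over how it meets $F$, and no swap is available to repair intersections with $F$: the set $F$ was kept precisely because the counting thresholds were \emph{not} met, so there is no flower with core inside $F$ to exploit.

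This is why the paper strengthens the induction hypothesis (as it announces before Lemma~\ref{lem:logdsmk}): for every $0 \leq j \leq k$ and every set $S$ of $d(k-j)$ elements, $\F_t(l)$ contains a $j$-packing avoiding $S$ if and only if $\F_t(l+1)$ does. The budget $d(k-j)$ is exactly what is needed to remember the elements of sets removed from the packing during the induction: in the kept-$F$ case one applies the hypothesis with $j-1$ and $S \cup F$; in the discarded-$F$ case the replacement $F^*$ must additionally avoid $S$, which the same $d(k-1)$-versus-$(d(k-1)+1)$ count affords since $|\Pee' \setminus \{F\}|$ and $|S|$ together contribute at most $d(j-1) + d(k-j) = d(k-1)$ elements. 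Separately, your literal hypothesis --- ``$\F_t(l)$ contains a $k$-packing iff $\F_t(l+1)$ contains a packing of size \emph{at most} $k$'' --- cannot be proved, because its backward direction is vacuous-absurd: every nonempty family contains a packing of size at most $k$ (a single set), so it would force every $\F_t(l)$ to contain a $k$-packing. The ``at most'' asymmetry in the lemma's wording is a slip in the paper, not a feature to build an argument on; the paper's own induction preserves packings of size exactly $j$, avoiding a prescribed set, in both directions.
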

 
\begin{proof}
 For each $0 \leq l < d$, we prove by induction over $0 \leq t \leq m$ that for any $0 \leq j \leq k$ and any set~$S \subseteq \binom{U}{d(k-j)}$, $\F_t(l)$ contains a packing $\Pee$ of size $j$ such that $S$ does not intersect with any set in $\Pee$ if and only if $\F_t(l+1)$ contains a packing $\Pee'$ of size $j$ such that $S$ does not intersect with any set in $\Pee'$. This proves the lemma since $\F_m(l) = \F(l)$, $\F_m(l+1) = \F(l+1)$, and for packings of size $j = k$ the set $S$ is empty. It trivially holds for any $t$ if $j = 0$; hence we assume $0 < j \leq k$. For $t = 0$ we have $\F_0(l) = \F_0(l+1) = \emptyset$ and the statement obviously holds. Let us assume that it holds for steps $t \leq i$ and consider step $i+1$ in which $F$ is processed. If $\F_{i+1}(l)$ contains a packing $\Pee$ of size $j$, then $F_{i+1}(l+1)$ also contains $\Pee$ since $\F_{i+1}(l+1) \supseteq \F_{i+1}(l)$. Thus, the status for avoiding intersection with any set $S$ remains the same.
 
 For the converse direction let us assume that $\F_{i+1}(l+1)$ contains a packing $\Pee$ of size $j$ that avoids intersection with a set $S$ of size $d(k-j)$. We must show that $\F_{i+1}(l)$ also contains a $j$-packing that avoids $S$. Let $F$ be the set that is processed in step $i+1$ of $B_l$. Suppose that $F \notin \Pee$. Then $\F_{i}(l+1)$ already contains $\Pee$ and by the induction hypothesis we have that $\F_{i}(l) \subseteq \F_{i+1}(l)$ contains a $j$-packing $\Pee'$ that avoids $S$. 
 
 In the other case $F \in \Pee$. Suppose that $F \in \F_{i+1}(l)$, i.e., $B_l$ decided to output $F$. We know that $\F_{i}(l+1)$ contains the $(j-1)$-packing $\Pee \setminus \{F\}$ which avoids $S \cup F$. By the induction hypothesis we have that $\F_{i}(l)$ contains a packing $\Pee''$ of size $j-1$ that avoids $S \cup F$. Thus, $\F_{i+1}(l)$ contains the $j$-packing $\Pee' = \Pee'' \cup \{F\}$ which avoids $S$. 
 
 Now suppose that $\F \notin \F_{i+1}(l)$. By assumption, $F \in \Pee \subseteq \F_{i+1}(l+1)$. This implies that $B_l$ decided not to output $F$ because it has established that there are at least $d(k-1)+1)^{d-|C|}$ supersets of some $C \subseteq F$ with $|C| = l$ in $\F_i(l+1)$. Furthermore, by the invariant for $\F(l+1)$ we have that for all sets $C'$ that are larger than $C$ there are at most $(d(k-1)+1)^{d-|C'|}$ supersets $F' \supseteq C'$ in $F_i(l+1) \subseteq \F(l+1)$. Consequently, by Lemma~\ref{lem:flowercounting} we have that $\F_i(l+1)$ contains $C$ or a $(d(k-1)+1)$-flower with core $C$. In the first case, any hitting set for $\F_i(l+1)$ must hit $C$, and in the second case any hitting set for $\F_i(l+1)$ requires at least $d(k-1)+1$ elements if it avoids hitting $C$; thus any hitting set of size at most $d(k-1)$ must hit $C$. By assumption, $\Pee \setminus \{F\}$ and $S$ both avoid $C\subseteq F$ and therefore they both avoid at least one set $F'$ in $\F_i(l+1)$ since together they contain at most $d(k-1)$ elements. Thus we can obtain a $j$-packing $\Pee''$ in $\F_{i+1}(l+1)$ that also avoids $S$ by replacing $F$ with $F'$. Since $\Pee''$ no longer contains $F$ we find that $\F_i(l+1)$ contains $\Pee''$ and by the induction hypothesis there is some packing $\Pee'$ of size $j$ in $\F_i(l) \subseteq \F_{i+1}(l)$ that avoids $S$.
\end{proof}

Since $B_d$ only discards duplicate sets it holds that $\F(d)$ has a $k$-packing if and only if $\F$ has a $k$-packing. As a consequence of Lemma~\ref{lem:logdsmk} we have that $\F(0)$ has a $k$-packing if and only if $\F(d)$ has a $k$-packing. Therefore, it follows from Observation~\ref{obs:logdsmk} and Lemma~\ref{prop:logdsmk} that $B_0$ is a logspace kernelization algorithm for \dSPk.

\begin{theorem} \label{thm:logdsmk}
 \dSPk admits a logspace kernelization that runs in time $\Oh(|\F|^{d+2})$ and returns an equivalent instance with at most $(d(k-1)+1)^d$ sets.
\end{theorem}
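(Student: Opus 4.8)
The plan is to verify that the algorithm $B_0$ from the collection $B_0,\ldots,B_d$ set up above is the claimed kernelization, by checking in turn the three requirements of a kernelization: equivalence of the input and output instances, the size bound $(d(k-1)+1)^d$, and the logarithmic space / $\Oh(|\F|^{d+2})$ time bounds. Since the genuine work has already been carried out in Observation~\ref{obs:logdsmk}, Lemma~\ref{prop:logdsmk}, and (above all) Lemma~\ref{lem:logdsmk}, the proof of the theorem itself is essentially assembly; I would simply compose these statements and add the standard ground-set cleanup.

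For equivalence I would walk along the chain $\F(0)\subseteq\F(1)\subseteq\cdots\subseteq\F(d)$ together with the relation between $\F(d)$ and the input family $\F$. The top link is immediate: because $B_d$ only suppresses duplicate sets, $\F(d)$ admits a $k$-packing exactly when $\F$ does. For every remaining link I would apply Lemma~\ref{lem:logdsmk}: for each $0\le l<d$ it states that $\F(l)$ contains a $k$-packing if and only if $\F(l+1)$ does (this is the $j=k$, $S=\emptyset$ instance of its strengthened induction statement). Composing these $d$ equivalences gives that $\F(0)$ has a $k$-packing iff $\F(d)$ has one, and hence iff $\F$ does, so $(U,\F,k)$ and $(U,\F(0),k)$ are equivalent instances of \dSPk.

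The size bound is exactly Observation~\ref{obs:logdsmk}: reading off the invariant for $\F(0)$ at the empty core $C=\emptyset$ shows that $\F(0)$ contains at most $(d(k-1)+1)^{d}$ sets. For the resources I would invoke Lemma~\ref{prop:logdsmk} with $l=0$, which gives that $B_0$ runs in logarithmic space and in time $\Oh(|\F|^{d-0+2})=\Oh(|\F|^{d+2})$. To turn this into a bona fide kernelization I would finally reduce the ground set to size polynomial in $k$ (at most $d(d(k-1)+1)^d$ elements) by appending one more relabelling layer, exactly as described for \dHSk at the end of Section~\ref{sec:logdhsk}; that layer only counts distinct already-output elements and therefore stays within logarithmic space.

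For the theorem as such nothing is hard, so I would instead flag where the difficulty actually lives, namely in Lemma~\ref{lem:logdsmk}. The reason the threshold is $(d(k-1)+1)$ rather than the $(k+1)$ used for hitting set is the exchange step: when a discarded set $F$ happens to lie in a candidate packing $\Pee$, it must be \emph{replaced} by another family member $F'$, and guaranteeing such an $F'$ requires the core $C\subseteq F$ to be the core of a $(d(k-1)+1)$-flower. Then the at most $d(k-1)$ elements of $\Pee \setminus \{F\}$ together with the avoided auxiliary set $S$ cannot block the restriction $\F_C$, so some admissible $F'\supseteq C$ survives. This exchange is the packing analogue of the statement ``any $k$-hitting set must meet $C$'' and is the only genuinely new idea over Section~\ref{sec:logdhsk}. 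I would also be careful to rely on the strengthened induction hypothesis of Lemma~\ref{lem:logdsmk} (packings of size $j$ avoiding a set $S$ of size $d(k-j)$) rather than a naive ``$k$-packing iff $k$-packing'' claim, since the counting in the exchange step only closes when the budget of avoided elements is tracked through the induction.
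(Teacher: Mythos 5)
Your proposal is correct and takes essentially the same route as the paper: the theorem is obtained by composing Observation~\ref{obs:logdsmk} (size bound), Lemma~\ref{prop:logdsmk} with $l=0$ (logspace and $\Oh(|\F|^{d+2})$ time), and the chain of equivalences from Lemma~\ref{lem:logdsmk} together with the fact that $B_d$ only discards duplicates, plus the ground-set relabelling layer carried over from Section~\ref{sec:logdhsk}. Your added remarks correctly locate the real work in the exchange argument of Lemma~\ref{lem:logdsmk} and its strengthened induction hypothesis, which is exactly where the paper places it as well.
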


\section{Logspace kernel for Edge Dominating Set} \label{sec:logedsk}
\introduceparameterizedproblem{\EDSk}{A graph $G(V,E)$ and $k \in \N$.}{$k$.}{Is there a set $S \subseteq E$ of at most $k$ edges such that every edge in $E \setminus S$ is incident with an edge in $S$?}

Our strategy for obtaining a logspace kernelization algorithm for \EDSk is as follows. First observe that the vertices of an edge dominating set of size at most $k$ are a vertex cover of size at most $2k$. This is frequently used in algorithms for this problem. Accordingly, we run our kernelization from Section~\ref{sec:logdhsk} for the case of vertex cover ($d=2$) with parameter $2k$, obtaining an equivalent instance $G'$ in which all minimal vertex covers of size at most $2k$ are preserved and proceed to add all edges between vertices of $G'$. 

We use some simulation to carry this out in logspace. Let \Rvc denote our logspace kernelization for \dHSk with $d = 2$ and parameter $2k$. The logspace kernelization algorithm \Reds for \EDSk proceeds as follows (see Algorithm~\ref{alg:edsk}). Count the number of vertices with degree at least $2k+1$ that \Rvc would output via simulation. If this is more than $2k$, then return a \no instance. Otherwise, output any edge between vertices that \Rvc would output, again via simulation. Let us now give an upper bound on the number of edges that \Reds will output.

\begin{algorithm} \label{alg:edsk} [t]
$c \leftarrow 0$\;
\ForEach{$v \in V$}
{
  simulate \Rvc\;
  \If{\Rvc finds a $(k+1)$-flower with empty core at any point during the computation}
  {return a \no instance\;}
  \If{\Rvc would output at least $2k+1$ edges incident to $v$}
  {$c \leftarrow c+1$\;}  
}
\eIf{$c > 2k$}
{ 
  return a \no instance\;
}
{
  \ForEach{$e = \{u, v\} \in E$}
  {
    simulate \Rvc\;
    \If{\Rvc would output at least one edge incident to $u$ and one edge incident to $v$}
    {
      output $e$ \;
    }
  }
} \caption{\Reds: Logspace kernel for \EDSk.}
\end{algorithm} 

\begin{lemma} \label{lem:edsize}
 \Reds commits $\Oh(k^3)$ edges to the output during computation.
\end{lemma}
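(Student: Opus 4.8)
The plan is to bound the output of $\Reds$ by carefully accounting for the two things it outputs: vertices (implicitly, via the vertex cover kernel $\Rvc$) and edges between those vertices. The key observation is that $\Reds$ only outputs an edge $e = \{u,v\}$ when $\Rvc$ would output at least one edge incident to each of $u$ and $v$; so the relevant vertices are exactly those touched by the output of $\Rvc$. Since $\Rvc$ is the $\dHSk$ kernelization of Section~\ref{sec:logdhsk} specialized to $d=2$ with parameter $2k$, Observation~\ref{obs:logdhsk} tells us that $\Rvc$ commits at most $(2k+1)^2 = \Oh(k^2)$ edges to its output.

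First I would argue that the number of distinct vertices appearing in the output of $\Rvc$ is $\Oh(k^2)$. Each output edge contributes at most two vertices, so from the $(2k+1)^2$ bound on edges we immediately get at most $2(2k+1)^2 = \Oh(k^2)$ vertices. Call this set $W$. Every edge that $\Reds$ outputs has both endpoints in $W$, since $\Reds$ requires $\Rvc$ to output an incident edge at each endpoint. Therefore the number of edges $\Reds$ outputs is at most $\binom{|W|}{2} = \Oh((k^2)^2) = \Oh(k^4)$, which is \emph{not} yet the claimed $\Oh(k^3)$ bound, so a cruder count does not suffice and the argument must be sharpened.

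The sharpening, and the main obstacle, is to exploit the degree filtering performed in lines 4--8 of Algorithm~\ref{alg:edsk}. I would split $W$ into high-degree vertices (those for which $\Rvc$ outputs at least $2k+1$ incident edges) and low-degree vertices. The check at line 9 guarantees that if there are more than $2k$ high-degree vertices, $\Reds$ returns a \no instance outright, so in the surviving case there are at most $2k$ high-degree vertices. For the low-degree vertices, each contributes at most $2k$ output edges of $\Rvc$ incident to it, and since the total number of edges $\Rvc$ outputs is $\Oh(k^2)$, the number of low-degree vertices that $\Rvc$ touches at all is actually bounded by $\Oh(k^2)$ — but crucially every output edge of $\Reds$ must be incident to a vertex whose $\Rvc$-degree is counted. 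The clean way to finish is to observe that each edge output by $\Reds$ is incident to \emph{at least one} vertex, and to bound edges by summing, over each of the $\Oh(k)$ high-degree vertices, the $\Oh(k^2)$ total $\Rvc$-edges (giving $\Oh(k^3)$), plus the edges among low-degree vertices. For the latter, since every output edge is itself an edge of $G'$ output by $\Rvc$ or is spanned by $\Rvc$-output edges at both endpoints, and there are only $\Oh(k^2)$ such $\Rvc$-edges in total, the low--low contribution is also $\Oh(k^2)$. Combining, the total is $\Oh(k) \cdot \Oh(k^2) + \Oh(k^2) = \Oh(k^3)$.

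The delicate point I expect to wrestle with is making precise which set of edges $\Reds$ can output, since an edge $e \in E$ of the \emph{original} graph may be output even if $e$ itself is not among the $\Rvc$-output edges — it is output whenever both endpoints are incident to some $\Rvc$-output edge. I would therefore charge each such $e$ to a high-degree endpoint whenever one exists (yielding $\Oh(k) \cdot \Oh(k^2) = \Oh(k^3)$ by summing incident $\Rvc$-edges over the $\le 2k$ high-degree vertices), and handle the case where \emph{both} endpoints are low-degree separately by noting that $\Rvc$'s vertex-cover kernel leaves only $\Oh(k)$ vertices of bounded degree spanning $\Oh(k^2)$ edges among themselves. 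The $\Oh(k^3)$ term from the high-degree charging dominates, giving the claimed bound.
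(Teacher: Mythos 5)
Your overall decomposition is the same as the paper's: bound the $\Rvc$-output $E''$ by $(2k+1)^2$ edges, split the vertices it touches into high-degree vertices $H$ (at least $2k+1$ incident output edges; $|H|\leq 2k$ by the check at line 9 of Algorithm~\ref{alg:edsk}) and low-degree vertices $L$ (with $|L|=\Oh(k^2)$), and then count $\Oh(k^2)$ edges inside $H$ plus $\Oh(k)\cdot\Oh(k^2)=\Oh(k^3)$ edges meeting $H$. The gap is in the low--low case, which you correctly flag as the delicate point but never actually close. You assert that every output edge ``is itself an edge output by $\Rvc$ or is spanned by $\Rvc$-output edges at both endpoints, and there are only $\Oh(k^2)$ such $\Rvc$-edges in total,'' but the second alternative is precisely the problematic one: an edge $e\in E$ of the original graph with both endpoints in $L$ need not belong to $E''$, and since $|L|=\Oh(k^2)$ there could a priori be $\Oh(k^4)$ such edges, exactly the crude bound you set out to beat. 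Your final remark that the kernel ``leaves only $\Oh(k)$ vertices of bounded degree'' is false ($|L|$ can be $\Omega(k^2)$), so it cannot rescue this count.

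What is missing is the paper's key argument for why no such edge exists, which requires looking at \emph{why} $\Rvc$ discards an edge. In the $d=2$ instantiation of the hitting-set kernel with parameter $2k$, an edge is withheld only because some core $C$ contained in it has too many supersets in the output: either $C$ is an endpoint that already has at least $2k+1$ incident output edges (so that endpoint lies in $H$), or $C=\emptyset$, i.e., a flower with empty core was found. The check at lines 4--5 of Algorithm~\ref{alg:edsk} makes $\Reds$ return a \no instance in the empty-core case, so in the surviving case every edge of $G$ that $\Rvc$ discards has an endpoint in $H$. Contrapositively, every edge of $G$ with both endpoints in $L$ lies in $E''$, so the low--low contribution is at most $|E''|\leq (2k+1)^2=\Oh(k^2)$. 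Without this step your argument only yields $\Oh(k^4)$, not the claimed $\Oh(k^3)$.
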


\begin{proof}
First observe that \Rvc would output a set of edges $E''$ of size at most $(2k+1)^2 = 4k^2+2k$ (the number of edges that the logspace kernel for \dHSk with $d = 2$ and parameter $2k$ would output). Let $H \subseteq V(E'')$ denote vertices with degree at least $2k+1$, and let $L \subseteq V(E'')$ denote vertices with degree between $1$ and $2k$. Assume that \Reds does not output a \no instance. We have that $|H| \leq 2k$ and $|L| = \Oh(k^2)$. \Reds outputs all edges between vertices in $H$, all edges between vertices in $L$ and all edges between $H$ and $L$. Let us first bound the number of edges between vertices in $L$. If \Rvc would not output an edge $e$, then this is because it has determined that there is a $(k+1)$-flower with an endpoint of $e$ as its core (in other words, there are at least $k+1$ other edges incident with the same vertex). By assumption \Reds does not output a \no instance, therefore the case where an edge is discarded because there is a flower with empty core does not apply.  Thus, any edge that \Rvc would not output has at least one endpoint with degree $k+1$. Therefore, \Rvc would output all edges between vertices in $L$ and we have that there are at most $|E''| \leq 4k^2+2k$ edges between vertices in $L$. There are $\Oh(k^2)$ edges between vertices in $H$ and a further $\Oh(k \cdot k^2) = \Oh(k^3)$ edges between $H$ and $L$. Thus, \Reds outputs $\Oh(k^3)$ edges, as claimed.
\end{proof}

Let us now show that \Reds meets the time and space requirements for a logspace kernel.

\begin{lemma} \label{lem:edspace}
 \Reds can be implemented such that it uses logarithmic space and runs in $\Oh(|E|^4)$ time.
\end{lemma}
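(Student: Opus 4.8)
The plan is to handle the space and time bounds separately, instantiating Lemma~\ref{prop:logdhsk} at $d=2$; here \Rvc\ is the level-$0$ algorithm $A_0$ of the hierarchy $A_0,A_1,A_2$. For space, I would note that a single simulation of \Rvc\ already runs in logarithmic space by Lemma~\ref{prop:logdhsk}. On top of such a simulation, Algorithm~\ref{alg:edsk} maintains only the counter $c$ (bounded by $2k+1<|V|$), the two loop indices over $V$ and $E$, and, inside a simulation, a counter tallying the output edges of \Rvc\ that are incident to a fixed vertex (again at most $2k+1$); each of these needs $\Oh(\log|E|)$ bits. Since \Reds\ runs its simulations strictly one at a time, the working space of a simulation is reused rather than accumulated, so the whole computation stays logarithmic.

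For the running time the naive estimate is too weak, and sharpening it is the crux. A full run of \Rvc\ costs $\Oh(|E|^{4})$ by Lemma~\ref{prop:logdhsk}, and Algorithm~\ref{alg:edsk} launches a simulation once per vertex and once per edge, so running \Rvc\ in full every time would only give $\Oh(|E|^{5})$. The idea is to never actually run the top level $A_0$. First I would dispose of the empty-core case explicitly: $A_0$ discards an edge through its $C=\emptyset$ test precisely once $A_1$ has produced $(2k+1)^{2}$ edges, and at that moment Lemma~\ref{lem:flowercounting} (applied with $C=\emptyset$) certifies a $(2k+1)$-flower with empty core, so no vertex cover of size $2k$ and hence no edge dominating set of size $k$ exists; \Reds\ then returns a \no\ instance. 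Conditioned on \Reds\ not doing so, the $C=\emptyset$ filter of $A_0$ never fires, hence $A_0$ and $A_1$ output exactly the same family. Every query that Algorithm~\ref{alg:edsk} poses to \Rvc---how many surviving edges meet $v$, and whether both endpoints of $e$ retain a surviving edge---can therefore be answered by simulating only $A_1$, which by Lemma~\ref{prop:logdhsk} costs $\Oh(|E|^{3})$.

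It then remains to charge the loops. Each iteration resolves its query with $\Oh(1)$ passes of $A_1$: a single pass in the vertex loop counts the surviving edges at $v$ (and detects the empty-core flower along the way), and a single pass in the edge loop tests both endpoints of $e$. With $\Oh(|E|)$ iterations overall---isolated vertices being irrelevant to edge domination, so that we may assume $|V|\le 2|E|$---the running time is $\Oh(|E|)\cdot\Oh(|E|^{3})=\Oh(|E|^{4})$, alongside the logarithmic space established above. The one observation that makes this go through, and the step I expect to be the main obstacle, is recognizing that excluding the empty-core flower collapses the expensive level $A_0$ onto $A_1$; this is what saves the factor of $|E|$ separating the naive $\Oh(|E|^{5})$ from the claimed $\Oh(|E|^{4})$.
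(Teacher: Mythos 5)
Your proof is correct, but it reaches the $\Oh(|E|^4)$ bound by a genuinely different mechanism than the paper. The paper's proof simply assumes that $G$ is a simple graph: then the duplicate-filtering bottom layer $A_2$ needs no lookback and runs in linear time, so \Rvc itself (i.e., $A_0$) runs in $\Oh(|E|^3)$ rather than the generic $\Oh(|E|^4)$ of Lemma~\ref{prop:logdhsk}, and Algorithm~\ref{alg:edsk} can afford to simulate \Rvc in full once per vertex and once per edge, giving $\Oh(|V|\cdot|E|^3 + |E|\cdot|E|^3) = \Oh(|E|^4)$. You instead leave \Rvc at its generic cost and never simulate $A_0$ at all: you observe that the only thing $A_0$ adds on top of $A_1$ when $d=2$ is the $C=\emptyset$ test, which fires precisely when $A_1$ outputs an edge after already having produced $(2k+1)^2$ of them, in which case Lemma~\ref{lem:flowercounting} certifies a $(2k+1)$-flower with empty core and \Reds returns a \no instance anyway; conditioned on that not happening, $A_0$ and $A_1$ produce identical output, so every query in Algorithm~\ref{alg:edsk} is answerable by one $\Oh(|E|^3)$ pass of $A_1$. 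Both routes are sound and your counting of passes, counters, and reused simulation space is fine. Your collapse argument even buys something the paper's does not: it needs no simple-graph assumption, since the duplicate check in $A_2$ is absorbed into $A_1$'s $\Oh(|E|^3)$ budget, so the bound holds verbatim for multigraphs; the price is that you must argue the implementation is still faithful to \Reds (the equivalence between the filter firing and the \no output), an argument the paper avoids by keeping the implementation a literal simulation of \Rvc. Note also that both proofs—yours explicitly, the paper's implicitly in the step $\Oh(|V|\cdot|E|^3+|E|^4)=\Oh(|E|^4)$—rely on disregarding isolated vertices so that $|V|=\Oh(|E|)$.
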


\begin{proof}
 Analogous to Lemma~\ref{prop:logdhsk} we have that \Rvc can be simulated in logspace and runs in time $\Oh(|E|^3)$ (we assume that $G$ is a simple graph; this saves a factor $|E|$ in the running time). It takes $\Oh(|V| \cdot |E|^3)$ time and logarithmic space to execute Lines 1 to 5: Besides the space reserved for simulating \Rvc, keep one counter to iterate over vertices in $V$ and another for counting the number of high degree vertices; both of these counters require at most $\log |V|$ bits. Executing Lines 6 to 7 clearly takes constant time and logspace. For Lines 8 to 12, reserve some memory to simulate \Rvc, two more bits in order to track if \Rvc outputs any edges incident to $u$ or $v$, and use $\log |E|$ bits for a counter to iterate over edges in $E$. This takes $\Oh(|E|\cdot |E|^3) = \Oh(|E|^4)$ time, i.e., the total running time is $\Oh(|V|\cdot |E|^3 + |E|^4) = \Oh(|E|^4)$.
\end{proof}

We proceed with a proof of correctness.

\begin{lemma} \label{lem:edcor}
 Let $G=(V, E)$ be the input graph for which \Reds outputs a graph $G'=(V',E')$. It holds that $G$ has an edge dominating set $S$ of size at most $k$ if and only if $G'$ has an edge dominating set $S'$ of size at most $k$.
\end{lemma}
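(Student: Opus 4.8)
The plan is to work from the explicit description of the output. Writing $E''$ for the edge set that \Rvc would output, we have $V' = V(E'')$ and $E' = \{e \in E : e \subseteq V'\}$, so that $E'' \subseteq E' \subseteq E$ and $G'$ is exactly the subgraph of $G$ induced on $V'$. Throughout I will use the elementary characterization that a set of edges $S$ is an edge dominating set of a graph if and only if its set of endpoints $V(S)$ is a vertex cover; in particular $|V(S)| \leq 2|S|$, so an edge dominating set of size at most $k$ yields a vertex cover of size at most $2k$. I also rely on the fact that \Rvc is expressive for \VCk with parameter $2k$: for every $X$ with $|X| \leq 2k$, $X$ is a vertex cover of $(V,E)$ if and only if $X$ is a vertex cover of $(V,E'')$. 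The two directions of the equivalence are then proved separately.

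For the easier backward direction, suppose $G'$ has an edge dominating set $S'$ of size at most $k$. Then $V(S')$ is a vertex cover of $G'$ with $V(S') \subseteq V'$ and $|V(S')| \leq 2|S'| \leq 2k$. Since $E'' \subseteq E'$, the set $V(S')$ is also a vertex cover of $(V,E'')$, so by expressiveness of \Rvc it is a vertex cover of $(V,E)$. As $S' \subseteq E' \subseteq E$, this shows that $S'$ is already an edge dominating set of $G$ of size at most $k$.

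The forward direction is the main obstacle, because an edge dominating set $S$ of $G$ may use edges that are absent from $E'$ (namely those with an endpoint outside $V'$), so $S$ cannot be transferred verbatim. Here I would run a replacement argument. First observe that $V(S) \cap V'$ is already a vertex cover of $G'$: every edge of $G'$ lies in $E'$ and hence has both endpoints in $V'$, and since $V(S)$ covers it, one of those two endpoints lies in $V(S) \cap V'$. Next I build $S' \subseteq E'$ from $S$ one edge at a time: if $e \in E'$, keep it; if $e$ has exactly one endpoint $w \in V'$ (so $e \notin E'$), replace $e$ by an arbitrary edge of $E'$ incident to $w$, which exists because $w \in V' = V(E'')$ and $E'' \subseteq E'$; and if both endpoints of $e$ lie outside $V'$, simply discard $e$. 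This yields $|S'| \leq |S| \leq k$, and in every case each $V'$-endpoint of the original edge $e$ is retained in $V(S')$, so $V(S') \supseteq V(S) \cap V'$. Hence $V(S')$ is a vertex cover of $G'$ and $S'$ is an edge dominating set of $G'$ of size at most $k$.

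I expect the replacement step to be the only delicate point: one must verify that discarding and swapping never destroys the covering property, which is exactly why I first isolate $V(S)\cap V'$ as a vertex cover of $G'$ and then argue only that $S'$ keeps all of its vertices. For completeness I would also note that the two rejection branches of \Reds are sound: if \Rvc exhibits a flower certifying that $G$ has no vertex cover of size at most $2k$, or if more than $2k$ vertices are each forced into every such vertex cover, then $G$ has no vertex cover of size at most $2k$ and therefore, by the endpoint characterization, no edge dominating set of size at most $k$, so returning a \no instance is correct.
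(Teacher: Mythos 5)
Your proof is correct and follows essentially the same route as the paper's: the backward direction passes through $V(S')$ being a $2k$-vertex cover of the subgraph \Rvc would output and invokes Lemma~\ref{lem:logdhsk} to lift it to $G$, and the forward direction uses the same edge-by-edge replacement of edges with an endpoint outside $V'$. If anything, your version is slightly tidier than the paper's: you explicitly handle edges with both endpoints outside $V'$ (which the paper's ``w.l.o.g.'' silently skips), justify why a replacement edge in $E'$ exists, and isolate the invariant $V(S') \supseteq V(S) \cap V'$ that makes the covering argument immediate.
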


\begin{proof}
 Suppose that $S$ is an edge dominating set of size at most $k$ for $G'$. Therefore, $V(S)$ is a vertex cover of size at most $2k$ for $G'$. Let $G''=(V',E'')$ be the subgraph of $G'=(V',E')$ that \Rvc would return. Now $V(S)$ is a vertex cover of size at most $2k$ for $G''$ and by Lemma~\ref{lem:logdhsk} we have that $V(S)$ is a $2k$-vertex cover for $G$. Therefore, $S$ is an edge dominating set for $G$ since the endpoints of edges in $S$ cover all edges in $E$.
 
 For the converse, suppose that $S$ is an edge dominating set of size at most $k$ for $G$. We know that $V(S)$ is a vertex cover of size at most $2k$ for $G$ and therefore also for $G'=(V', E')$ since $E' \subseteq E$. Now it could be the case that $S$ has some edges that only have one endpoint in $V'$. Thus we will obtain an edge dominating set $S'$ of size at most $k$ as follows, starting with $S' = \emptyset$. We consider each edge $e = \{u, v\} \in S$. If $u, v \in V'$, then $e \in E'$; we add $e$ to $S'$ in order to cover edges that are incident with $e$. Otherwise, w.l.o.g., we have $u \in V', v \notin V'$, i.e., there are no edges in $E'$ that are incident with $v$. Then $e$ can be replaced by any other edge $e'$ that is incident with $u$; we add $e'$ to $S'$.
\end{proof}

The following theorem is a consequence of Lemmas~\ref{lem:edsize} through~\ref{lem:edcor}.

\begin{theorem}
 \EDSk admits a logspace kernelization that runs in time $\Oh(|E|^4)$ and returns an equivalent instance with $\Oh(k^3)$ edges.
\end{theorem}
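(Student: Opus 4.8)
The plan is to assemble the three preceding lemmas, each of which certifies exactly one of the defining properties of a kernelization. Recall that a kernelization must run in polynomial time, output an equivalent instance, and bound the size of that instance by a function of the parameter alone. Since the parameter $k$ is passed through unchanged by \Reds, it suffices to verify the running-time/space bound, the instance equivalence, and the size bound as three independent ingredients and then combine them.

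Concretely, I would first invoke Lemma~\ref{lem:edspace}, which states that \Reds can be implemented in logarithmic space and runs in time $\Oh(|E|^4)$. This simultaneously discharges the logspace requirement and the claimed running time, and in particular witnesses that \Reds terminates in polynomial time, as demanded of any kernelization. Next, Lemma~\ref{lem:edcor} guarantees that the input graph $G$ has an edge dominating set of size at most $k$ if and only if the output graph $G'$ does; because $k$ is not modified, this establishes that $(G,k)$ and $(G',k)$ are equivalent instances of \EDSk. Finally, Lemma~\ref{lem:edsize} bounds the number of edges that \Reds commits to the output by $\Oh(k^3)$ (unless a \no instance is returned, which is trivially of constant size), yielding the claimed polynomial kernel size. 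Taken together these three statements directly give the theorem.

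The only point that I expect to need genuine care — and hence the main (if mild) obstacle — is that a kernel bound should cover the entire instance, not merely its edge set, while respecting the logspace constraint. Here every output edge contributes at most two endpoints, so the set of vertices incident to output edges has size $\Oh(k^3)$ as well, and isolated vertices carry no information for \EDSk and may be dropped. To turn this into a genuine instance whose encoding length is polynomial in $k$, I would append one further logspace layer that relabels the surviving vertices with identifiers drawn from $\{1,\ldots,\Oh(k^3)\}$, using the same simulation trick described at the end of Section~\ref{sec:logdhsk}: whenever \Reds outputs an edge, the new identifier of each endpoint is recomputed by counting, via simulation, the number of distinct vertices output before that vertex's first appearance. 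This relabeling adds only a polynomial-time, logspace overhead and leaves the $\Oh(|E|^4)$ bound of Lemma~\ref{lem:edspace} as the dominant cost. With this postprocessing in place, the three lemmas jointly show that \Reds is a logspace kernelization for \EDSk that runs in time $\Oh(|E|^4)$ and returns an equivalent instance with $\Oh(k^3)$ edges.
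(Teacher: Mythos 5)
Your proposal is correct and matches the paper's proof, which simply states that the theorem is a consequence of Lemmas~\ref{lem:edsize} through~\ref{lem:edcor} (size, resource bounds, and equivalence, respectively). Your extra paragraph on relabeling vertices to get an encoding of size polynomial in $k$ is also consistent with the paper, which handles this by the same simulation trick in its remark at the end of Section~\ref{sec:logdhsk} and notes it applies to the other logspace kernels as well.
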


\section{Logspace kernelization for hitting and packing constant sized subgraphs} \label{sec:loggraph}
\introduceparameterizedproblem{\dHFVDk}{A graph $G(V,E)$, and $k \in \N$.}{$k$.}{Is there a set~$S$ of at most~$k$ vertices of $G$ such that $G[V\setminus S]$ does not have any $H \in \mathcal{H}$ as an induced subgraph?}

In this section we consider some hitting and packing problems on graphs. We will start with a logspace kernelization algorithm for \dHFVDk where we assume that for some constant~$d$ we have $|V(H)| \leq d$ for all $H \in \mathcal{H}$. We first describe a 'sloppy' version of this kernel for which we allow parallel edges to appear in the output graph. This algorithm, denoted as $R_0$, is an adaptation of the logspace kernelization for \dHSk in which we let $V$ play the role of the finite set $U$. The family $\F$ consists of those sets of vertices corresponding to the occurrences of graphs of $\mathcal{H}$ in $G$, i.e., a hitting set for $\F$ is a solution for \dHFVDk. 

Note that we must iterate over sets in $\F$ in order to run the \dHSk logspace kernelization while $\F$ is not given explicitly in the input. In order to do this the algorithm reserves $d \log n$ bits of space. This allows it to iterate over all subsets of vertices with cardinality at most $d$. It then identifies that such a set is in $\F$ if its induced graph coincides with a forbidden graph in $\mathcal{H}$, which can be verified in constant space. 

Only two such iterators are required, one for step by step processing of sets in $\F$, and another to obtain the count that is used to decide whether or not a set $F \in \F$ should appear in the output. When $R_0$ does decide to output $F$, we simply output all edges in $G[F]$. This is where parallel edges can appear since the sets of edges that are committed to the output may partially overlap. Using $R_0$ as a building block, we now present the proper logspace kernelization algorithm $R_1$ for \dHFVDk (Algorithm~\ref{alg:dhfvdk}).

\begin{algorithm} \label{alg:dhfvdk}.
 \ForEach{$e \in E$}
 {
  simulate $R_0$\;
  \If{$R_0$ decides to output $e$ for the first time}
  {
     output $e$\;
     halt the simulation of $R_0$\;
  }
}
\caption{$R_1$: Logspace kernel for \dHFVDk.}
\end{algorithm} 

\begin{observation}
 $R_1$ commits at most $\frac{d(d-1)}{2} \cdot (k+1)^d$ edges to the output during the computation. This follows from Observation~\ref{obs:logdhsk}, where each set corresponds to a graph with at most $d$ vertices and $\frac{d(d-1)}{2}$ edges.
\end{observation}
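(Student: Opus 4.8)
The plan is to establish the edge bound for $R_1$ by appealing directly to the already-proven size bound for the underlying hitting set kernelization. The key observation is that $R_1$ outputs an edge $e$ exactly once, and only when the simulated algorithm $R_0$ decides to output $e$ for the first time; since $R_0$ itself is the adapted \dHSk kernelization applied to the family $\F$ of vertex sets inducing a forbidden subgraph from $\mathcal{H}$, its output obeys the bound from Observation~\ref{obs:logdhsk}. First I would recall that, by Observation~\ref{obs:logdhsk}, $R_0$ commits at most $(k+1)^d$ sets $F$ to the output, where each such $F \in \F$ is a vertex set of cardinality at most $d$ that induces some $H \in \mathcal{H}$.

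The next step is to count edges per committed set. Each set $F$ that $R_0$ outputs has $|F| \leq d$ vertices, so the induced subgraph $G[F]$ contributes at most $\binom{d}{2} = \frac{d(d-1)}{2}$ edges. Whenever $R_0$ decides to output such a set $F$, the sloppy kernel $R_0$ would emit all edges of $G[F]$; thus the total multiset of edges emitted by $R_0$ has size at most $\frac{d(d-1)}{2} \cdot (k+1)^d$. Since $R_1$ outputs each edge $e$ only the first time $R_0$ would output it (and the set of edges $R_1$ outputs is precisely the support of the edges $R_0$ would emit), the number of distinct edges committed by $R_1$ is bounded above by this same quantity. This directly yields the claimed bound of $\frac{d(d-1)}{2} \cdot (k+1)^d$.

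The main subtlety — though not really an obstacle — is making precise the relationship between the edges $R_0$ emits (with multiplicity, via parallel copies from overlapping induced subgraphs) and the distinct edges that $R_1$ keeps. I would argue that every edge output by $R_1$ arises as an edge of $G[F]$ for some set $F$ that $R_0$ commits, so the edge set of $R_1$'s output injects into the multiset of $(F, e)$ pairs with $e \in E(G[F])$ and $F$ committed by $R_0$. Counting these pairs gives at most $\frac{d(d-1)}{2}$ per committed set and at most $(k+1)^d$ committed sets, so the product bounds $|E(G')|$. Because $R_1$ deduplicates (outputting each edge at most once via the ``first time'' guard), this is an upper bound on the number of distinct edges, which is exactly what the observation claims.
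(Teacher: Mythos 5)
Your proof is correct and takes essentially the same route as the paper: both invoke Observation~\ref{obs:logdhsk} to bound the number of sets committed by $R_0$ by $(k+1)^d$ and then charge at most $\frac{d(d-1)}{2}$ edges to each committed set. Your explicit accounting of the multiset of emitted edges versus the distinct edges kept by $R_1$'s first-occurrence guard is just a more detailed spelling-out of the deduplication step the paper leaves implicit.
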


It is easy to see that $R_1$ resolves the issue with parallel edges and is executable in logspace. For running time analysis, note that family $\F$ has size $\Oh(\binom{|V|}{d})$. The running time for $R_0$ is $\Oh(\binom{|V|}{d}^{d+1})$ since we do not have multiple copies of sets in $\F$. Therefore $R_1$ runs in time $\Oh(|E| \cdot \binom{|V|}{d}^{d+1})$. Consequently, we give the following theorem.

\begin{theorem}
 \dHFVDk admits a logspace kernelization that runs in time $\Oh(|E|\cdot \binom{|V|}{d}^{d+1})$ and outputs an equivalent instance with at most $\frac{d(d-1)}{2} \cdot (k+1)^d$ edges, where $d$ is the maximum number of vertices of any $H \in \mathcal{H}$.
\end{theorem}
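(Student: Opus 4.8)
The plan is to establish the theorem as a direct consequence of the three properties already set up for the algorithm $R_1$: its correctness, its output size, and its resource bounds. Since the preceding \emph{Observation} already bounds the number of output edges by $\frac{d(d-1)}{2}\cdot(k+1)^d$, and the running-time discussion already gives $\Oh(|E|\cdot\binom{|V|}{d}^{d+1})$, the bulk of the work remaining is to verify \emph{correctness}: that the graph $G'$ returned by $R_1$ is a \yes-instance of \dHFVDk if and only if $G$ is. I would state this as the missing ingredient and prove it before collecting everything into the final theorem.

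First I would make precise the reduction from \dHFVDk to \dHSk that underlies $R_0$. Let $\F$ be the family over ground set $V$ whose members are exactly the vertex sets $W\subseteq V$ with $|W|\le d$ such that $G[W]\cong H$ for some $H\in\mathcal{H}$. The key observation is that a set $S\subseteq V$ satisfies that $G[V\setminus S]$ contains no induced $H\in\mathcal{H}$ precisely when $S$ hits every member of $\F$: if some $W\in\F$ is disjoint from $S$ then $W$ survives in $G[V\setminus S]$ and induces a forbidden subgraph, and conversely any surviving induced copy of some $H$ yields a member of $\F$ disjoint from $S$. Hence $S$ is a solution to \dHFVDk on $G$ of size $\le k$ if and only if $S$ is a $k$-hitting set for $\F$.

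Next I would transfer this equivalence through the \dHSk kernel. Applying the reasoning behind Lemma~\ref{lem:logdhsk} to the family $\F$ (run by $R_0$ with $d$ as given and parameter $k$), the output family $\F(0)$ has the property that a set $S$ of size at most $k$ hits $\F(0)$ if and only if it hits $\F$. Now $R_1$ outputs exactly the edges of $\bigcup_{F\in\F(0)} E(G[F])$, i.e.\ the edge set of the subgraph $G'$ induced by the retained copies, with duplicate edges suppressed. The only subtlety to address is that $G'$ is defined through edges rather than directly through the family: I would argue that the set of induced forbidden subgraphs surviving a vertex set $S$ in $G'$ corresponds exactly to the members of $\F(0)$ disjoint from $S$, since every $F\in\F(0)$ is fully present as an induced subgraph of $G'$ (all its internal edges are output) and no \emph{spurious} forbidden copy is introduced (the output contains no edges outside the retained copies). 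Combining this with the hitting-set equivalence for $\F(0)$ versus $\F$, I would conclude that $G$ admits a \dHFVDk solution of size $\le k$ iff $G'$ does.

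The main obstacle I anticipate is precisely that last point: verifying that passing to the edge set of the retained copies does not create \emph{new} induced forbidden subgraphs that were absent from $\F(0)$. One must check that if $G'[W]\cong H$ for some $W$ and some $H\in\mathcal{H}$, then $W$ was already a member of $\F$ (hence its status is controlled by the equivalence for $\F(0)$), which holds because $G'$ is an edge-subgraph of $G$ on the same vertex set and induced subgraph relations for a forbidden copy are determined by edges all of which are present in $G$ as well; however, care is needed because $G'$ may be \emph{missing} edges of $G$, so an induced copy in $G'$ need not be induced in $G$. I would resolve this by noting that the relevant direction for correctness is monotone: deleting edges can only destroy or preserve, not create, additional induced copies of a fixed $H$ relative to those already tracked, and the retained-copy structure of $\F(0)$ exactly captures the copies that matter for hitting-set equivalence. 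Once this is settled, assembling the size bound from the \emph{Observation}, the running-time bound from the preceding paragraph, and the logspace implementability of $R_1$ yields the theorem immediately.
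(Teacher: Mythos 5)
Your overall route is the same as the paper's: encode the induced occurrences of graphs in $\mathcal{H}$ as a family $\F \subseteq \binom{V}{\leq d}$, run the \dHSk logspace kernel on it to obtain a retained subfamily $\F(0)$ that preserves all $k$-hitting sets, and output the edges of the retained copies with duplicates suppressed; the size and running-time bounds are then read off exactly as you do. The difference is that the paper gives no explicit correctness argument for this theorem at all---it silently inherits Lemma~\ref{lem:logdhsk}---whereas you explicitly isolate the one step that this inheritance does not cover: the output graph $G'$ is an \emph{edge}-subgraph of $G$, so an induced forbidden copy in $G'$ need not be a member of $\F$. Identifying this point is to your credit and goes beyond what the paper writes down.

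However, your resolution of that point is a genuine gap. The claim that ``deleting edges can only destroy or preserve, not create, additional induced copies of a fixed $H$'' is false: induced-subgraph containment is not monotone under edge deletion. Concretely, take $\mathcal{H} = \{P_3\}$ and suppose $G$ contains a triangle $\{a,b,c\}$; if $ab$ and $bc$ each lie in some retained copy but no retained copy contains both $a$ and $c$, then $ac \notin E(G')$ and $G'[\{a,b,c\}]$ is an induced $P_3$ of $G'$ that does not exist in $G$, hence is not in $\F$ and not controlled by the hitting-set equivalence. A solution for $G$ has no obligation to hit this spurious copy, so the direction ``if $G$ has a solution of size at most $k$ then so does $G'$'' is not established by your argument (nor, for that matter, by the paper's implicit one). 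To close the gap you would either have to change the output to the subgraph of $G$ induced by $W = \bigcup_{F \in \F(0)} F$---then every induced copy in the output is an induced copy in $G$, both directions go through, and the construction is still a logspace simulation, at the price of weakening the edge bound to at most $\binom{d(k+1)^d}{2}$---or prove separately that spurious copies can never block every $k$-hitting set of $\F(0)$, which you do not do. (Alternatively, the edge-union construction and your monotonicity argument are sound verbatim for hitting \emph{not necessarily induced} subgraphs, where edge deletion really cannot create new copies.)
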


A similar adaptation of the \dSPk logspace kernelization algorithm yields a logspace kernel for the problem of finding a size $k$ disjoint union of occurrences of a constant sized graph $H$ in a host graph $G$, e.g., {\sc Triangle Packing($k$)}.

\begin{theorem}
 \dHPk admits a logspace kernelization that runs in time $\Oh(|E|\cdot \binom{|V|}{d}^{d+1})$ and outputs an equivalent instance with at most $\frac{d(d-1)}{2} \cdot (d(k-1)+1)^d$ edges, where $d = |V(H)|$.
\end{theorem}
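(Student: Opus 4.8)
The plan is to follow the \dHFVDk kernel of this section almost verbatim, replacing the \dHSk kernel by the \dSPk kernel $B_0$ from Theorem~\ref{thm:logdsmk}. I would take $U = V$ and let $\F \subseteq \binom{U}{\leq d}$ be the family of all vertex sets $F$ of size $d = |V(H)|$ such that $G[F]$ contains a (not necessarily induced) copy of $H$; with this choice a $k$-packing of $\F$ is precisely a collection of $k$ vertex-disjoint occurrences of $H$ in $G$. As before, $\F$ is not given explicitly, so I would iterate over it implicitly using a single counter over all vertex subsets of size at most $d$, which needs $d\log|V|$ bits, and test in constant space whether the current set induces a graph that contains $H$. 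Feeding this implicit family to $B_0$ gives a \`sloppy' algorithm $R_0'$ that outputs every edge of $G[F]$ whenever $B_0$ decides to keep $F$; this may produce parallel edges, which I would remove by wrapping $R_0'$ in an outer algorithm $R_1'$ that iterates over $e \in E$, simulates $R_0'$, and commits $e$ the first time $R_0'$ would output it.

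The counting and resource bounds then transfer directly. By Observation~\ref{obs:logdsmk}, $B_0$ keeps at most $(d(k-1)+1)^d$ sets, and each kept set contributes at most $\frac{d(d-1)}{2}$ edges, so $R_1'$ commits at most $\frac{d(d-1)}{2}\cdot(d(k-1)+1)^d$ edges, as required. Since the implicit family $\F$ has size $\Oh(\binom{|V|}{d})$ and contains no duplicate sets, the bottom duplicate-removal layer of $B_0$ is superfluous and $R_0'$ runs in $\Oh(\binom{|V|}{d}^{d+1})$ time; the outer loop of $R_1'$ over the $|E|$ edges then yields the claimed $\Oh(|E|\cdot\binom{|V|}{d}^{d+1})$ running time. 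The logspace bound is immediate from Lemma~\ref{prop:logdsmk}: apart from the logspace simulation of $B_0$ we only store the vertex-subset iterator, a counter over $E$, and a constant number of flags.

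What remains is correctness, namely that the output graph $G'=(V',E')$ has a $k$-packing of $H$ if and only if $G$ does, and here I would chain three equivalences. By the definition of $\F$, graph $G$ has $k$ vertex-disjoint copies of $H$ iff $\F$ has a $k$-packing; by Lemma~\ref{lem:logdsmk} (applied across the layers $\F(0)\subseteq\cdots\subseteq\F(d)$, using that $B_d$ merely discards duplicates) $\F$ has a $k$-packing iff the kept family $\F(0)$ does. The new ingredient is to relate $\F(0)$ to the edge-level output $G'$. For the forward direction I would observe that every $F\in\F(0)$ was kept, so all edges of $G[F]$ were output; combined with $E'\subseteq E$ this forces $G'[F]=G[F]$, so $F$ still hosts a copy of $H$ in $G'$, and the disjoint members of a $k$-packing of $\F(0)$ give $k$ vertex-disjoint copies of $H$ in $G'$. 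For the converse I would instead use $G'\subseteq G$ directly: any $k$ vertex-disjoint copies of $H$ in $G'$ are also copies in $G$. The step I expect to require the most care is exactly this bridge between the set-family guarantee of $B_0$ and the edge output: the forward direction is sound only because $R_0'$ emits all of $G[F]$ so that kept occurrences survive intact, and the converse must be routed through $G'\subseteq G$ rather than through $\F(0)$, since an occurrence of $H$ in $G'$ need not lie on a kept set. This is also why I would work with (not necessarily induced) copies of $H$: under $E'\subseteq E$ the subgraph relation is preserved in both directions, whereas an induced copy in $G'$ could fail to be induced in $G$.
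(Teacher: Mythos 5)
Your proposal takes essentially the same approach as the paper: the paper proves this theorem only by remarking that the \dHFVDk construction ($R_0$/$R_1$) should be repeated with the \dSPk kernel $B_0$ in place of the hitting-set kernel, which is exactly what you do, and your size, time, and space accounting coincide with the paper's. The details you add that the paper leaves implicit---chaining Lemma~\ref{lem:logdsmk} across the layers, bridging from $\F(0)$ to the edge-level output via $G'[F]=G[F]$ for kept sets, and working with (not necessarily induced) copies of $H$ so that the converse direction goes through $G'\subseteq G$---are correct fill-ins of that adaptation.
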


\section{Linear-time kernel for Hitting Set} \label{sec:lindhsk}
We will now present a kernelization for \dHSk that runs in linear time. The algorithm processes sets in $\F$ one by one and decides whether they should appear in the final output or not. For a single step $t$, let $\F_t$ denote the sets that have been processed by the algorithm so far and let $\F'_t \subseteq \F_t$ be the sets stored in memory for which it has decided positively. The algorithm also uses a data structure in which it can store the number of supersets of a set $C \in \binom{U}{\leq d}$ that it has stored in $\F'_t$. Let supersets[$C$] denote the entry for set $C$ in this data structure. We begin by initializing supersets[$C$] $\leftarrow 0$ for each $C \subseteq F$ where $F \in \F$. 

We maintain the invariant that the number of sets $F \in \F'_t$ that contain $C \in \binom{U}{<d}$ is at most $(k+1)^{d-|C|}$. Now assume that the invariant holds after processing sets $\F_t \subseteq F$ and let $F$ be the next set. Algorithm~\ref{alg:lindhsk} describes how the algorithm processes $F$ in step $t+1$.

\begin{algorithm} \label{alg:lindhsk}
 \ForEach{$C \subseteq F$}
 {
  query supersets[$C$] in order to determine the number of supersets of $C$ in $\F'_t$\;
  \If{the result is at least $(k+1)^{d-|C|}$}
  {
    do not store $F$ and end the computation for step $t+1$\;
  }
 }
 store $F$\;
 \ForEach{$C \subseteq F$}
 {
   supersets[$C$] $\leftarrow$ supersets[$C$]+1\;
 }
 \caption{Step $t+1$ of the linear-time kernel for \dHSk.}
\end{algorithm}

Note that Line 3 and 4 ensure that the invariant is maintained. After all sets in $\F$ have been processed, the algorithm returns the family $\F' \subseteq \F$ of sets that it has stored in memory.

\begin{observation} \label{obs:lindhsk}
 After any step $t$, the algorithm has stored at most $(k+1)^d$ sets in $\F'_t$. This follows from the invariant when considering $C = \emptyset$.
\end{observation}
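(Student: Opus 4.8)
The plan is to derive the stated size bound directly from the loop invariant maintained by Algorithm~\ref{alg:lindhsk}: after every step $t$ and for every $C \in \binom{U}{<d}$, the family $\F'_t$ contains at most $(k+1)^{d-|C|}$ supersets of $C$. Granting this invariant the observation is immediate, since taking $C = \emptyset$ gives $|C| = 0$ and every stored set is a superset of $\emptyset$; hence $|\F'_t| \leq (k+1)^{d-0} = (k+1)^d$. The substance of the argument therefore lies entirely in establishing the invariant, with the observation itself being a one-line instantiation.

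First I would prove the invariant by induction on $t$, carried out in tandem with the auxiliary claim that the stored value supersets[$C$] always equals the true number of supersets of $C$ in $\F'_t$. The base case $t = 0$ is trivial: $\F'_0 = \emptyset$ and every entry supersets[$C$] is initialized to $0$. For the inductive step, suppose both claims hold after step $t$ and consider step $t+1$, in which a set $F$ is processed. If $F$ is not stored, then $\F'_{t+1} = \F'_t$, no entry is modified, and both claims persist. If $F$ is stored, then for each $C \subseteq F$ the guard in lines 3--4 was not triggered, so by the auxiliary claim the number of supersets of $C$ in $\F'_t$ was strictly below $(k+1)^{d-|C|}$.

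The one point requiring care is checking that storing $F$ preserves the bound for every $C$, not only those contained in $F$. For $C \subseteq F$, lines 6--7 raise supersets[$C$] by exactly one, and since the pre-store count was strictly below the threshold, it ends at most $(k+1)^{d-|C|}$; this also re-establishes the auxiliary claim for these entries. For $C \not\subseteq F$, the newly stored $F$ is not a superset of $C$, so the true count of supersets of $C$ is unchanged, and correspondingly lines 6--7 touch only entries indexed by subsets of $F$, leaving supersets[$C$] untouched; thus both claims carry over for these $C$ as well.

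I expect no genuine obstacle here; the only subtlety is keeping the data-structure invariant (supersets[$C$] equalling the actual count) synchronized with the size invariant, since the guard in lines 3--4 relies on the former being accurate, and this synchronization must be threaded through the induction rather than argued separately. Once the invariant is established for all $t$, instantiating it at $C = \emptyset$ delivers the bound $|\F'_t| \leq (k+1)^d$ asserted in the observation.
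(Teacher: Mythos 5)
Your proposal is correct and matches the paper's approach exactly: the paper also justifies the observation by instantiating the maintained invariant at $C = \emptyset$, noting that lines 3--4 of the algorithm enforce the invariant. Your inductive verification of the invariant (together with the synchronization of supersets[$C$] with the true counts) simply spells out details the paper leaves implicit.
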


\begin{lemma} \label{prop:lindhsk}
 The algorithm can be implemented such that it runs in linear time.
\end{lemma}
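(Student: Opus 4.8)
The plan is to argue that, after a linear-time preprocessing phase, each of the $m = |\F|$ steps of Algorithm~\ref{alg:lindhsk} takes constant time, giving total running time $\Oh(n + dm)$, which is linear in the size of the input $(U, \F, k)$. In step $t+1$ the algorithm enumerates all subsets $C \subseteq F$ of the set $F$ being processed; since $|F| \leq d$ there are at most $2^d = \Oh(1)$ of them, each comparable in $\Oh(d)$ time. For every such $C$ it issues one query to supersets$[C]$, compares the result against $(k+1)^{d-|C|}$, and possibly performs one increment. The relevant thresholds $(k+1)^{0}, \ldots, (k+1)^{d}$ can be tabulated once in $\Oh(d)$ time. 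Hence, as long as a single query and a single increment of the supersets structure cost $\Oh(1)$, the whole step is $\Oh(1)$ and the time bound follows.

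The main obstacle is precisely realising supersets$[\cdot]$ so that both operations run in worst-case constant time: a hash table achieves this only in expectation, which does not suffice for a deterministic linear-time guarantee, while a flat array indexed by all of $\binom{U}{\leq d}$ would be far too large even to initialise. I would resolve this by invoking the data structure of van Bevern~\cite{van2014towards}, which is designed to maintain exactly one counter per subset occurring inside the input sets. In a preprocessing phase one generates, for every $F \in \F$ and every $C \subseteq F$, the pair consisting of $C$ (written as a sorted tuple of at most $d$ integers from $\{1, \ldots, n\}$, padded with a sentinel) together with a reference to this occurrence. These at most $2^d m$ tuples are grouped by radix sort; since each tuple has length $d$ and its entries lie in a range of size $n+1$, this costs $\Oh(d(2^d m + n)) = \Oh(n + m)$ for constant $d$. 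Equal subsets become adjacent, so each distinct $C$ is assigned its own counter slot and every occurrence of $C$ is equipped with a pointer to that slot.

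With this bookkeeping in place, the online processing never has to search for a counter: when step $t+1$ enumerates the subsets of the current input set, it simply follows the precomputed pointers to read and update the corresponding slots in $\Oh(1)$ time each, while the test in Lines 3 and 4 guarantees correctness independently of the timing. Initialising all slots to zero costs $\Oh(2^d m) = \Oh(m)$, matching the number of generated pairs. Summing the $\Oh(n + m)$ preprocessing with the $\Oh(1)$ cost of each of the $m$ steps yields total running time $\Oh(n + dm)$, i.e. linear in the input, as claimed. The only point that needs careful checking is that the radix sort and the pointer administration stay within linear time for constant $d$, which holds because every generated object has size $\Oh(d)$ and there are only $\Oh(m)$ of them.
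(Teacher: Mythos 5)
Your proof is correct and yields the claimed linear running time for constant $d$, and it shares the paper's overall skeleton: linear-time preprocessing, followed by at most $2^d$ counter operations per processed set, each in constant time. The difference lies in how the counter structure supersets$[\cdot]$ is realized. The paper's proof is very short: it sorts $\F$ with the linear-time procedure of van Bevern~\cite{van2014towards} and implements supersets$[\cdot]$ as the trie from that same paper, which supports queries and updates in $\Oh(|C|)$ time; for constant $d$ this is $\Oh(1)$ per operation and the bound follows. You instead construct the structure explicitly: enumerate all at most $2^d|\F|$ pairs consisting of a subset $C$ (in canonical sorted, padded form) and a reference to its occurrence, radix sort these tuples in $\Oh(d(2^d|\F|+|U|))$ time to make equal subsets adjacent, allocate one counter slot per distinct subset, and write back a direct pointer from each occurrence to its slot, so the online phase does worst-case $\Oh(1)$ work per query and increment by following pointers. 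Both arguments are sound; yours is more self-contained (it relies only on radix sort, not on the internals of the cited trie) and it correctly rules out hash tables (expected time only) and naively indexed arrays (initialization cost), which are exactly the concerns the trie is designed to address. One small inaccuracy to fix: what you describe is not really ``the data structure of van Bevern''---his structure is a trie over the sorted sets---so you should present your radix-sorted pointer table as an alternative realization rather than as an invocation of \cite{van2014towards}.
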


\begin{proof}
 Before we run the algorithm, we sort the sets in $\F$ using the linear-time procedure described in \cite{van2014towards}. 
 Implementing the data structure as a trie \cite{van2014towards} allows us to query and update supersets[$C$] in time $\Oh(|C|)$ for a set $C$ where $|C| \leq d$. We perform both of these operations at most $2^d$ times in each step (this follows from the number of sets $C \subseteq F$ that we consider). Therefore we spend only constant time for each set that is processed and hence the algorithm runs in time~$\Oh(|\F|)$.
\end{proof}

We obtain the following lemma in a similar way to Lemma~\ref{lem:logdhsk}. The key difference is that we can now check if there are at least $(k+1)^{d-|C|}$ supersets of a set $C$ in $\F_t$ by using the data structure. Crucially, the algorithm has direct access to the previously processed sets that it plans to output; this simplifies the proof. 

\begin{lemma} \label{lem:lindhsk}
 When a family $\F_t \subseteq \F$ has been processed, the algorithm has a family $\F'_t \subseteq \F_t$ such that for any set $S$ of size at most $k$, $S$ is a hitting set for $\F_t$ if and only if $S$ is a hitting set for $\F'_t$.
\end{lemma}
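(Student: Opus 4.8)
The plan is to prove the statement by induction on the number $t$ of processed sets, mirroring the argument for Lemma~\ref{lem:logdhsk} but now exploiting that the algorithm has direct access to the stored family $\F'_t$ through the \texttt{supersets} data structure, which is what makes this version simpler. The base case $t=0$ is immediate since $\F_0=\F'_0=\emptyset$. One direction of the equivalence is trivial throughout: because $\F'_t\subseteq\F_t$, every $k$-hitting set for $\F_t$ is automatically a hitting set for $\F'_t$. So the work lies entirely in the converse, and I would set it up as an induction over $0\le t\le m$.

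For the inductive step, suppose the claim holds for $t=i$ and let $F$ be the set processed in step $i+1$; assume $S$ with $|S|\le k$ hits $\F'_{i+1}$, and we must show that $S$ hits $\F_{i+1}=\F_i\cup\{F\}$. If the algorithm stores $F$, then $\F'_{i+1}=\F'_i\cup\{F\}$, so $S$ hits both $\F'_i$ and $F$; the induction hypothesis gives that $S$ hits $\F_i$, and together $S$ hits $\F_{i+1}$. The interesting case is when $F$ is discarded, i.e.\ $\F'_{i+1}=\F'_i$. By Algorithm~\ref{alg:lindhsk} this happens only because the query in Line~2 returned, for some $C\subseteq F$, at least $(k+1)^{d-|C|}$ supersets of $C$ in $\F'_i$. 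If this triggering $C$ equals $F$ with $|F|=d$, then $F$ itself already lies in $\F'_i$ as a duplicate and $S$ hits it directly; otherwise we may take $|C|<d$.

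With $|C|<d$ I would then invoke Lemma~\ref{lem:flowercounting} with $l=k+1$ applied to $\F'_i$. Condition~(1) is precisely the certificate that triggered the discard. Condition~(2)---that every strictly larger core $C'\supsetneq C$ has at most $(k+1)^{d-|C'|}$ supersets in $\F'_i$---is supplied by the maintained invariant for cores of size $<d$, and holds trivially for $|C'|=d$ since a size-$d$ set has only itself as a superset of size $\le d$ (and $(k+1)^0=1$). The lemma then yields that $\F'_i$ contains the set $C$, or a $(k+1)$-flower with core $C$.

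In either outcome a hitting set of size at most $k$ must meet $C$: if $C\in\F'_i$ this is because $S$ hits $C$ directly, and in the flower case, were $S$ disjoint from $C$ it would restrict to a blocking set for $(\F'_i)_C$, which by Definition~\ref{def:flower} needs at least $k+1$ elements, contradicting $|S|\le k$. Hence $S\cap C\neq\emptyset$, so $S$ hits $F\supseteq C$; combined with $S$ hitting $\F_i$ (from the induction hypothesis applied to $\F'_i=\F'_{i+1}$) this shows $S$ hits $\F_{i+1}$, closing the induction. The main obstacle, exactly as in the logspace proof, is the bookkeeping step of confirming that the invariant really delivers condition~(2) of Lemma~\ref{lem:flowercounting} for the particular core $C$ flagged by the data structure; once that is verified the flower argument closes everything with no further calculation.
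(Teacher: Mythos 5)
Your proof is correct and follows essentially the same route as the paper's: induction over the processing steps, the trivial direction from $\F'_t\subseteq\F_t$, and in the discard case an application of Lemma~\ref{lem:flowercounting} with $l=k+1$ (condition~(1) from the triggering query, condition~(2) from the invariant) to conclude that any $k$-hitting set for $\F'_i$ must meet $C$ and hence $F$. If anything, you are slightly more careful than the paper, since you separately handle the corner case $|C|=d$ (where $F$ is a stored duplicate) and note that condition~(2) holds trivially for cores of size $d$, two points the paper's proof passes over silently.
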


\begin{proof}
 We prove the lemma by induction and show for each step $t$ that a set $S$ of size $k$ is a hitting set for $\F'_t$ if and only if $S$ is a hitting set for $\F_t$. For $t=0$ we have $\F'_0 = \F_0 = \emptyset$ and the statement obviously holds. Now let us assume that it holds for steps $t \leq i$ and prove that it also holds for step $i+1$. One direction is trivial: If a set $S$ is a $k$-hitting set for $\F_{i+1}$, then it is also a hitting set for $\F'_{i+1} \subseteq \F_{i+1}$. 
 
 For the converse direction let us suppose that a set $S$ is a $k$-hitting set for $\F'_{i+1}$. We must show that $S$ is also a hitting set for $\F_{i+1}$. Let $F$ be the set that is processed in step $i+1$. We must consider the case where the algorithm decides not to output $F$. Otherwise a hitting set for $\F'_{i+1}$ must hit $\F'_i \cup \{F\}$ and thus by the induction hypothesis it must also hit $\F_i \cup \{F\} = \F_{i+1}$.
 
 Now let us assume $F \notin \F'_{i+1}$. This implies that the algorithm decided not to output $F$ because for some $C \subseteq F$ it has determined that there are at least $(k+1)^{d-|C|}$ supersets of $F$ in $\F'_i$. Furthermore, by the invariant we have that there are at most $(k+1)^{d-|C'|}$ supersets of $C'$ in $\F'_i$ for any $C'$ that is larger than $C$. Consequently, by Lemma~\ref{lem:flowercounting} we have that $\F'_i$ contains $C$ or a $(k+1)$-flower with core $C$, i.e., any hitting set of size $k$ for $\F'_i$ must hit $C$ and therefore also hits $F \supseteq C$. Let $S$ be a $k$-hitting set for $\F'_{i+1}$ and note that $S$ is also a $k$-hitting set for $\F'_i \subseteq \F'_{i+1}$. By the induction hypothesis, $S$ is also a hitting set for $\F_i$. We have just established that $S$ must hit $C$ in order to hit $\F'_i$. Thus, $S$ hits $F$ and is a hitting set for $\F_{i+1} = \F_i \cup \{F\}$.\end{proof}

Reducing the size of the ground set is simple since we have the equivalent instance in memory. In one pass we compute a mapping of elements occurring in the equivalent instance to $d(k+1)^d$ distinct identifiers. The following theorem is a consequence of Observation~\ref{obs:lindhsk}, Lemma~\ref{prop:lindhsk} and Lemma~\ref{lem:lindhsk}.

\begin{theorem} \label{thm:lindhsk}
 \dHSk admits a linear-time kernelization which returns an equivalent instance with at most $(k+1)^d$ sets.
\end{theorem}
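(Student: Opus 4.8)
The plan is to assemble the theorem directly from the three preceding results, since the substantive content---correctness of the discarding rule---is already established in Lemma~\ref{lem:lindhsk}. First I would instantiate that lemma at the final step $t = m$, where $m = |\F|$ is the total number of sets. There $\F_m = \F$ is the whole input family and $\F'_m = \F'$ is the family returned by the algorithm, so the lemma states that for every $S$ with $|S| \le k$, $S$ is a hitting set for $\F$ if and only if $S$ is a hitting set for $\F'$. Quantifying over all such $S$ immediately upgrades this to equivalence of the decision instances: $(U, \F, k)$ is a \yes-instance exactly when $(U, \F', k)$ is, because any size-$\le k$ witness for one family is a witness for the other. The parameter is unchanged and $\F' \subseteq \F$, so there is no $k'$ to adjust.

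Next I would read the size bound off Observation~\ref{obs:lindhsk}: setting $C = \emptyset$ in the maintained invariant gives $|\F'| \le (k+1)^d$, and every member of $\F'$ has at most $d$ elements. To obtain a bona fide kernel I still need the ground set bounded in terms of $k$; this is the single relabeling pass described just before the statement, which renames the elements actually occurring in $\F'$ using at most $d(k+1)^d$ fresh identifiers. Since $\F'$ resides in memory this pass is immediate, so the whole output has size polynomial in $k$ for the fixed constant $d$.

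For the running time I would invoke Lemma~\ref{prop:lindhsk}: with the linear-time sorting of~\cite{van2014towards} and the trie implementation of the supersets[$\cdot$] structure, each of the at most $2^d$ subsets $C \subseteq F$ inspected per step of Algorithm~\ref{alg:lindhsk} costs $\Oh(|C|) = \Oh(d)$ time, so the algorithm processes all of $\F$ in $\Oh(|\F|)$ time; the relabeling pass adds only a further linear term. Combining the instance equivalence, the $(k+1)^d$ bound on $|\F'|$ together with the reduced ground set, and the linear running time then yields the theorem.

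I do not expect a real obstacle here, as the statement is effectively a corollary of Lemma~\ref{lem:lindhsk}, Observation~\ref{obs:lindhsk}, and Lemma~\ref{prop:lindhsk}. The only point deserving a moment's care is confirming that the ground-set reduction neither inflates the running time beyond linear nor enlarges the family, which holds because it is a read-only pass over the already-stored $\F'$.
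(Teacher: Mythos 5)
Your proposal is correct and matches the paper's own proof, which likewise obtains the theorem as a direct consequence of Lemma~\ref{lem:lindhsk} (instance equivalence), Observation~\ref{obs:lindhsk} (the $(k+1)^d$ bound via $C = \emptyset$), and Lemma~\ref{prop:lindhsk} (linear time), together with the same one-pass relabeling of the ground set to at most $d(k+1)^d$ identifiers. No gaps; the decomposition and the handling of the ground-set reduction are essentially identical to the paper's.
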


\section{Linear-time kernel for Set Packing} \label{sec:lindspk}
In the following we will present a linear-time kernelization for \dSPk. The algorithm performs almost the same steps as the linear-time kernelization for \dHSk such that only the invariant differs. In this case we maintain that the number of sets $F \in \F'_t$ that contain $C \in \binom{U}{\leq d}$ is at most $(d(k-1)+1)^{d-|C|}$. We point out the main differences.

\begin{observation} \label{obs:lindspk}
 After any step $t$, the algorithm has stored at most $(d(k-1)+1)^d$ sets in $\F'_t$. This follows from the invariant when considering $C = \emptyset$.
\end{observation}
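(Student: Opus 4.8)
The plan is to derive the bound directly from the maintained invariant, exactly as in the Hitting Set counterpart (Observation~\ref{obs:lindhsk}). Recall that the \dSPk algorithm maintains, at every step $t$, that for each $C \in \binom{U}{\leq d}$ the number of stored sets $F \in \F'_t$ with $C \subseteq F$ is at most $(d(k-1)+1)^{d-|C|}$. The observation is then nothing more than the instantiation of this invariant at the empty core $C = \emptyset$: every set contains $\emptyset$, so the total number of stored sets equals the number of supersets of $\emptyset$ in $\F'_t$, which the invariant bounds by $(d(k-1)+1)^{d-|\emptyset|} = (d(k-1)+1)^d$.

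First I would confirm that the invariant is genuinely preserved throughout the run. Since the algorithm performs the same steps as Algorithm~\ref{alg:lindhsk} with only the threshold changed from $(k+1)^{d-|C|}$ to $(d(k-1)+1)^{d-|C|}$, the maintenance argument transfers verbatim: a set $F$ is stored only when, for every $C \subseteq F$, the count supersets[$C$] is still strictly below $(d(k-1)+1)^{d-|C|}$, and the subsequent increment of supersets[$C$] for each such $C$ therefore never pushes a counter past its threshold. An easy induction on $t$ then yields the invariant, and setting $C = \emptyset$ finishes the argument.

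I do not expect any real obstacle here; the statement is a one-line consequence of the invariant. The only point requiring a moment's care is checking that enlarging the threshold to $(d(k-1)+1)^{d-|C|}$ does not break the maintenance argument, but this is immediate because the update logic is unchanged and never increments a counter that has already reached its (now larger) bound. The genuinely different work for \dSPk---the correctness of the reduction for packings, analogous to Lemma~\ref{lem:logdsmk}---is handled in a separate lemma and plays no role in this counting bound.
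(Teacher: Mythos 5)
Your proof is correct and matches the paper's reasoning exactly: the observation is just the invariant instantiated at $C = \emptyset$, since every stored set is a superset of the empty set, giving the bound $(d(k-1)+1)^{d-|\emptyset|} = (d(k-1)+1)^d$. Your additional check that the invariant's maintenance argument carries over verbatim from the \dHSk algorithm (only the threshold changes) is precisely what the paper relies on when it says the \dSPk algorithm ``performs almost the same steps \ldots such that only the invariant differs.''
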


Analogous to Lemma~\ref{prop:lindhsk} we obtain the following.

\begin{lemma} \label{prop:lindspk}
 The algorithm can be implemented such that it runs in linear time.
\end{lemma}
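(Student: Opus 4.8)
The plan is to follow the proof of Lemma~\ref{prop:lindhsk} essentially verbatim, since the \dSPk algorithm executes the same steps as the \dHSk algorithm of Algorithm~\ref{alg:lindhsk} and differs only in the numerical threshold against which the superset counts are compared. First I would preprocess by sorting the sets in \F with the linear-time procedure of van Bevern, and maintain the counts supersets[$C$] in a trie, so that each query or update for a set $C$ with $|C| \leq d$ costs $\Oh(|C|) = \Oh(d)$ time.

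The core of the argument is to bound the work done in a single step. As in the \dHSk case, processing a set $F$ touches exactly the subsets $C \subseteq F$; since each such $F$ has at most $d$ elements, there are at most $2^d$ of them, a constant. For each $C$ we perform one query and, if $F$ is stored, one update, each in $\Oh(d)$ time. The only genuinely new point is the comparison in the analogue of Line 3: here we test the count against $(d(k-1)+1)^{d-|C|}$ rather than $(k+1)^{d-|C|}$. I would precompute the at most $d+1$ distinct thresholds (one per value of $|C| \in \{0,\ldots,d\}$) once, before the main loop; each threshold, and each maintained count, is bounded by $(d(k-1)+1)^d$ and hence fits in $\Oh(d \log(dk))$ bits, so a comparison is constant time on the standard RAM model. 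Thus each step costs $\Oh(1)$ time and the whole algorithm runs in $\Oh(|\F|)$ time.

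The step I expect to require the most care is not an algorithmic difficulty but rather the bookkeeping needed to justify that the threshold arithmetic is genuinely constant time even though $k$ is part of the input and the thresholds grow with $k$: one must observe that for fixed $d$ the exponent $d-|C|$ ranges over a constant-size set and that the counts actually stored never exceed $(d(k-1)+1)^d$ (this is exactly the invariant from Observation~\ref{obs:lindspk}), so both the precomputation of thresholds and the per-step comparisons can be charged as $\Oh(1)$. With this in place the remaining steps transfer unchanged from Lemma~\ref{prop:lindhsk}.
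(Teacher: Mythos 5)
Your proposal is correct and matches the paper's approach: the paper proves this lemma simply by declaring it analogous to Lemma~\ref{prop:lindhsk} (sort via van Bevern's linear-time procedure, maintain supersets[$C$] in a trie with $\Oh(|C|)$ queries/updates, at most $2^d$ subsets per processed set, hence $\Oh(|\F|)$ total). Your extra care about the threshold arithmetic being constant time is a reasonable refinement the paper leaves implicit, but it does not change the argument.
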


Let us proceed with a proof of correctness.

\begin{lemma} \label{lem:lindspk}
 When a family $\F_t \subseteq \F$ has been processed, the algorithm has a set $\F'_t\subseteq \F_t$ such that $\F'_t$ contains a packing $\Pee$ of size $k$ if and only if $\F_t$ contains a packing $\Pee'$ of size $k$. 
\end{lemma}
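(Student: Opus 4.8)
The plan is to mirror the proof of Lemma~\ref{lem:logdsmk} but in the simpler setting where the kept family $\F'_t$ is directly available, just as Lemma~\ref{lem:lindhsk} simplifies Lemma~\ref{lem:logdhsk}. Since both $\F'_t$ and $\F_t$ now live in one layer (there is no chain $\F(0)\subseteq\cdots\subseteq\F(d)$), I expect the induction to be over steps $0\leq t\leq m$ alone, with the strengthened hypothesis on packings that avoid a forbidden set. Concretely, I would prove by induction on $t$ the statement: for every $0\leq j\leq k$ and every set $S\in\binom{U}{d(k-j)}$, the family $\F'_t$ contains a $j$-packing avoiding $S$ if and only if $\F_t$ contains a $j$-packing avoiding $S$. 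Taking $j=k$ forces $S=\emptyset$ and yields the lemma. The base case $t=0$ is trivial as both families are empty, and the case $j=0$ holds for all $t$.

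For the inductive step at stage $i+1$, where set $F$ is processed, the easy direction ($\F'_{i+1}\subseteq\F_{i+1}$) is immediate: a $j$-packing in $\F'_{i+1}$ avoiding $S$ is already one in $\F_{i+1}$. For the converse, suppose $\F_{i+1}$ has a $j$-packing $\Pee$ avoiding $S$. If $F\notin\Pee$ then $\Pee\subseteq\F_i$ and the induction hypothesis gives a $j$-packing in $\F'_i\subseteq\F'_{i+1}$ avoiding $S$. If $F\in\Pee$ and the algorithm stored $F$ (so $F\in\F'_{i+1}$), then $\Pee\setminus\{F\}$ is a $(j-1)$-packing in $\F_i$ avoiding $S\cup F$, and $|S\cup F|\leq d(k-j)+d=d(k-(j-1))$, so the induction hypothesis produces a $(j-1)$-packing in $\F'_i$ avoiding $S\cup F$; adding $F$ back gives the required $j$-packing in $\F'_{i+1}$.

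The crux is the remaining case: $F\in\Pee$ but the algorithm did \emph{not} store $F$. The rejection was triggered by some $C\subseteq F$ with $\mathrm{supersets}[C]\geq(d(k-1)+1)^{d-|C|}$ in $\F'_i$, and by the invariant all strictly larger cores $C'$ have at most $(d(k-1)+1)^{d-|C'|}$ supersets in $\F'_i$. Lemma~\ref{lem:flowercounting} (with $l=d(k-1)+1$) then guarantees that $\F'_i$ contains either $C$ itself or a $(d(k-1)+1)$-flower with core $C$, so any blocking set for the restriction $\F'_{iC}$ of size at most $d(k-1)$ must meet $C$. I would use the set $W=(S\cup\bigcup(\Pee\setminus\{F\}))\setminus C$, which has at most $d(k-j)+d(j-1)=d(k-1)$ elements and avoids $C$; hence $W$ fails to block $\F'_{iC}$, so there is some $F'\in\F'_i$ with $F'\supseteq C$ whose part outside $C$ is disjoint from $W$. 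Since $F'\supseteq C\subseteq F$, swapping $F$ for $F'$ in $\Pee$ yields a $j$-packing $\Pee''\subseteq\F_i$ (note $F'$ was stored earlier, so $F'\in\F'_i\subseteq\F_i$) that still avoids $S$; applying the induction hypothesis to $\Pee''$ gives the desired $j$-packing in $\F'_i\subseteq\F'_{i+1}$.

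The main obstacle is exactly this last step: verifying that the element budget $d(k-1)$ is respected and that the flower guarantee yields a usable replacement $F'$ avoiding both $S$ and the other packing members outside $C$. One must be careful that $\Pee\setminus\{F\}$ and $S$ together contribute at most $d(k-1)$ elements outside $C$, which matches the flower threshold precisely, and that the chosen $F'$ is genuinely disjoint from the rest of $\Pee$ (its intersection with $C\subseteq F$ is harmless only because $F$ is being removed). This parallels the corresponding paragraph in the proof of Lemma~\ref{lem:logdsmk}, and the direct access to $\F'_t$ removes the need for the second layer of induction that complicated that proof.
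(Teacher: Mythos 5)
Your proof is correct and follows essentially the same route as the paper: the same strengthened induction hypothesis over steps $t$ (packings of size $j$ avoiding a set $S$ of $d(k-j)$ elements), the same case analysis, and the same application of Lemma~\ref{lem:flowercounting} with the $d(k-1)$ element budget in the rejection case. The only cosmetic difference is in that final case, where you swap $F$ for $F'$ inside $\F_i$ and then apply the induction hypothesis, whereas the paper applies the induction hypothesis first (to $\Pee\setminus\{F\}$) and performs the swap inside $\F'_i$; both orderings are valid, and yours in fact matches the paper's own proof of the logspace analogue (Lemma~\ref{lem:logdsmk}).
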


\begin{proof}
 We prove the lemma by induction and show for each step $t$ that for $0 \leq j \leq k$ and any set $S \subseteq \binom{U}{d(k-j)}$, $\F'_t$ contains a packing $\Pee$ of size $j$ such that $S$ does not intersect with any set in $\Pee$ if and only if $\F_t$ contains a packing $\Pee'$ of size $j$ such that $S$ does not intersect with any set in $\Pee$. This implies the lemma statement since for packings of size $j=k$ the set $S$ is empty. It trivially holds for any $t$ if $j=0$; hence we assume that $0 < j \leq k$. For $t = 0$ we have $\F'_0 = \F_0 = \emptyset$ and the statement obviously holds. Now let us assume that it holds for steps $t \leq i$ and prove that it also holds for step $i+1$.  If $\F'_{i+1}$ contains a packing $\Pee$ of size $j$, then $\F_{i+1}$ also contains $\Pee$ since $\F_{i+1} \supseteq \F'_{i+1}$. Thus, the status for avoiding intersection with any set $S$ remains the same. 
 
 For the converse direction let us assume that $\F_{i+1}$ contains a packing $\Pee$ of size $j$ that avoids intersection with a set $S$ of size $d(k-j)$. We must show that $\F'_{i+1}$ also contains a $j$-packing that avoids $S$. Let $F$ be the set that is processed in step $i+1$. Suppose that $F \notin \Pee$. Then $\F_i$ contains $\Pee$ and by the induction hypothesis we have that $\F'_i \subseteq \F'_{i+1}$ contains a packing $\Pee'$ of size $j$ that avoids $S$. 
 
 In the other case $F \in \Pee$. Suppose that $F \in \F'_{i+1}$. We know that $\F_i$ contains the packing $\Pee \setminus \{F\}$ which avoids $S \cup F$. By the induction hypothesis we have that $\F'_i$ contains a packing $\Pee''$ of size $j-1$ that avoids $S \cup F$. Hence, $\F'_{i+1}$ contains the $j$-packing $\Pee' = \Pee'' \cup \{F\}$ which avoids $S$. 
 
 Now suppose that $F \notin \F'_{i+1}$. This implies that the algorithm decided not to output $F$ because it has determined that there are at least $(d(k-1)+1)^{d-|C|}$ supersets of some set $C \subseteq F$ in $\F'_i$. Furthermore, by the invariant we have that there are at most $(d(k-1)+1)^{d-|C'|}$ supersets of $C'$ in $\F'_i$ for any $C'$ that is larger than $C$. Consequently, by Lemma~\ref{lem:flowercounting} we have that $\F'_i$ contains $C$ or a $(d(k-1)+1)$-flower with core $C$, i.e., any hitting set for $\F'_i$ of size at most $d(k-1)$ for $\F'_i$ must hit $C$. We show that there is a $j$-packing in $\F'_{i+1}$ that avoids $S$: We know that $\Pee \setminus \{F\}$ is a $(j-1)$-packing in $\F_i$ that avoids $S \cup F$. By the induction hypothesis, there is a $(j-1)$-packing $\Pee''$ in $\F'_i$ that also avoids $S \cup F$.  By assumption, $\Pee''$ and $S$ both avoid $C \subseteq F$ and therefore they also avoid at least one set $F'$ in $\F'_i$ since together they contain at most $d(k-1)$ elements. Thus, we can obtain the packing $\Pee' = \Pee'' \cup \{F\}$ of size $j$ which avoids~$S$.
\end{proof}

The following theorem is a consequence of Observation~\ref{obs:lindspk}, Lemma~\ref{prop:lindspk} and Lemma~\ref{lem:lindspk}.

\begin{theorem}
 \dSPk admits a linear-time kernelization which returns an equivalent instance with at most $(d(k-1)+1)^d$ sets.
\end{theorem}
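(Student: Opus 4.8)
The plan is to assemble the theorem directly from the three results that precede it, instantiated at the final step $t = m$ of the algorithm. First I would recall the setup: the linear-time kernelization processes the sets of $\F$ one at a time and maintains a stored subfamily $\F'_t \subseteq \F_t$, using the superset-counting data structure to decide whether each incoming set $F$ is kept or discarded, while preserving the invariant that every $C \in \binom{U}{\leq d}$ has at most $(d(k-1)+1)^{d-|C|}$ supersets in $\F'_t$. Running this algorithm to completion on input $(U, \F, k)$ yields the output family $\F' := \F'_m \subseteq \F$, and the output parameter is $k$ (unchanged).

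The correctness of the reduction is immediate from Lemma~\ref{lem:lindspk} applied at $t = m$: since $\F_m = \F$ and $\F'_m = \F'$, that lemma states that $\F'$ contains a $k$-packing if and only if $\F$ does. Hence $(U, \F', k)$ and $(U, \F, k)$ are equivalent instances of \dSPk, which is exactly the equivalence required of a kernelization. For the size bound I would invoke Observation~\ref{obs:lindspk} at $t = m$, giving $|\F'| \leq (d(k-1)+1)^d$, and for the time bound I would invoke Lemma~\ref{prop:lindspk}, which guarantees that the whole procedure runs in time $\Oh(|\F|)$.

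To complete the claim that the output is a genuine kernel (polynomial in $k$ in total size, not merely in the number of sets), I would add the ground-set reduction exactly as in the \dHSk case: since the equivalent instance $\F'$ already resides in memory, a single additional pass suffices to relabel the elements appearing in $\F'$ with at most $d \cdot (d(k-1)+1)^d$ fresh identifiers, because each of the at most $(d(k-1)+1)^d$ stored sets has size at most $d$; this pass preserves the linear running time. I do not expect any genuine obstacle at this stage, as all of the substance has already been discharged: the real difficulty lies in Lemma~\ref{lem:lindspk}, whose strengthened induction hypothesis (quantifying simultaneously over the packing size $j$ and an avoided set $S$ of size $d(k-j)$, and invoking the flower-counting guarantee of Lemma~\ref{lem:flowercounting} with $l = d(k-1)+1$) is precisely what accounts for the set-packing case being harder than hitting set. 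The theorem itself is then only a matter of collecting Observation~\ref{obs:lindspk}, Lemma~\ref{prop:lindspk}, and Lemma~\ref{lem:lindspk} into a single statement.
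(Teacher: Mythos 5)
Your proposal is correct and matches the paper's own proof, which likewise obtains the theorem as a direct consequence of Observation~\ref{obs:lindspk} (size bound), Lemma~\ref{prop:lindspk} (linear running time), and Lemma~\ref{lem:lindspk} (equivalence at $t=m$). Your extra ground-set relabeling pass is a sensible addition that mirrors what the paper does explicitly only in the \dHSk section, and it changes nothing of substance.
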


\section{Concluding remarks} \label{sec:conc}
In this paper we have presented logspace kernelization algorithms for \dHSk and \dSPk. We have shown how these can be used to obtain logspace kernels for hitting and packing problems on graphs, and have given a logspace kernelization for \EDSk.
By using flowers instead of sunflowers we save a large hidden constant in the size of these kernels. Furthermore, we have improved upon a linear-time kernel for \dHSk and have given a linear-time kernel for \dSPk. One question to settle would be whether a vertex-linear kernel for \VCk can be found in logspace. While known procedures for obtaining a vertex-linear kernel seem unsuitable for adaptation to logspace, we currently also do not have the tools for ruling out such a kernel. The problem of finding a vertex-linear kernel for \VCk in linear time also remains open.

\bibliographystyle{abbrv}
\bibliography{references}

\newpage
\appendix

\section{Finding a maximum (sun)flowers} \label{app:npflow}
\subsection{\NP-hardness proof for finding a sunflower of size $l$.}
\begin{proof}
 Reduction from \dSP, $d \geq 3$. Suppose $(U, \F, k)$ is an instance of \dSP. We obtain an instance $(U', \F', l)$ for \textsc{sunflower} as follows. Let $m = |\F|$. Make disjoint copies $\F_1, \ldots, \F_{m+1}$ of $\F$ such that $\F_i$ has subsets over finite set $U_i$. Let $U' = U_1 \cup \ldots \cup U_{m+1}$, let $\F' = \F_1 \cup \ldots \cup \F_{m+1}$, and let $l = k(m+1)$. We show that $\F$ has a set packing $\Pee$ of size $k$ if and only if $\F'$ contains a sunflower of with $l$ petals. First, suppose $\F$ has a packing $\Pee$ of size $k$. Then we obtain a sunflower with $k(m+1)$ petals and an empty core by taking $k$ sets corresponding to $\Pee$ in each $\F_i$. For the converse direction, suppose that $\F'$ has a sunflower with $k(m+1)$ petals and core $C$. Since each family $\F_i$ contains at most $m$ sets we know that there must be at least one petal in $\F_i$ and another in $\F_j$ for some $i \neq j$. Thus, $C = \emptyset$. Furthermore, since there are $k(m+1)$ petals, there must be some $i$ such that at least $k$ petals are in $\F_i$. These $k$ petals form a packing since they must be completely disjoint. 
\end{proof}

\subsection{\coNP-hardness proof for finding a flower of size $l$.}
\begin{proof}
 Reduction from \notdHS.  Suppose $(U, \F, k)$ is an instance of \notdHS. We obtain an instance $(U', \F', l)$ for $k$-\textsc{flower} as follows. Let $m = |\F|$. Make disjoint copies $\F_1, \ldots, \F_{m+1}$ of $\F$ such that $\F_i$ has subsets over finite set $U_i$. Let $U' = U_1 \cup \ldots \cup U_{m+1}$, let $\F' = \F_1 \cup \ldots \cup \F_{m+1}$, and let $l = (k+1)(m+1)$. We show that $\F$ does not have a hitting set $S$ of size $k$ if and only if $\F'$ contains a $l$-flower. First, suppose $\F$ does not have a $k$-hitting set $S$. Therefore, there must be some collection of sets in $\F$ that requires at least $k+1$ elements to hit. By taking copies of these sets from each $\F_i$ and choosing $C = \emptyset$ we obtain a $(k+1)(m+1)$-flower. For the converse direction, suppose that $\F'$ has a $(k+1)(m+1)$-flower with core $C$. Since each family $\F_i$ contains at most $m$ sets we know that this $(k+1)(m+1)$-flower must contain some sets from $\F_i$ and $\F_j$ such that $i \neq j$. Thus, $C = \emptyset$. Furthermore, since by definition we have that $(k+1)(m+1)$ elements are required to hit $\F'$, we know that there must be some $\F_i$ where we need at least $k+1$ elements for a hitting set. 
\end{proof}

\subsection{\NP-hardness proof for finding a sunflower with $l$ petals and core $C \subseteq F$ for a specific set $F$.} \label{app:npsun}

\begin{proof}
 Reduction from \dSP, $d \geq 3$. Given an instance $(U, \F, k)$, we add a set $F$ that is disjoint from any set in $\F$ and ask if $\F$ contains a sunflower with $k$ petals and core $C \subseteq F$. We show that this is the case if and only if $\F$ has a set packing of size $k$. Suppose that $\F$ has a sunflower with $k$ petals and core $C \subseteq F$. By construction, $C = \emptyset$ since $F$ is completely disjoint from $\F$, i.e., the sunflower is a set packing of size $k$. For the converse direction suppose that $\F$ contains a $k$-set packing $\Pee$. Thus, $\Pee$ is also a sunflower with core $C = \emptyset \subseteq F$. 
\end{proof}

\end{document}